\def\BibTeX{{\rm B\kern-.05em{\sc i\kern-.025em b}\kern-.08em
    T\kern-.1667em\lower.7ex\hbox{E}\kern-.125emX}}
\newcommand{\jw}{j\omega}
\newcommand{\bbW}{\mathbb{W}}
\newcommand{\weights}{\begin{bsmallmatrix}I & \bbW\end{bsmallmatrix}}
\newcommand{\cA}{\mathcal{A}}
\newcommand{\cBM}{\mathcal{B}_{\scriptscriptstyle M}}
\newcommand{\cBN}{\mathcal{B}_{\scriptscriptstyle N}}
\newcommand{\sC}[1][]{\mathscr{C}_{\ell\ifx#1\empty\else,#1\fi}}
\newcommand{\mycaption}[1]{\caption{\footnotesize #1 }}
\newcommand{\hstm}{\hspace{2em}}
\newcommand{\sm}{\text{-}}
\newcommand{\lcb}{\left\{}
\newcommand{\rcb}{\right\}}
\newcommand{\todo}[1]{\textcolor{red}{TODO: #1}}
\newtheorem{problem}{Problem}
\newtheorem{assumption}{Assumption}
\begin{document}
\title{An iterative tangential interpolation algorithm for model reduction of MIMO systems}
\author{Jared Jonas and Bassam Bamieh, \IEEEmembership{Fellow, IEEE}
\thanks{Jared Jonas and Bassam Bamieh are  with the department of Mechanical Engineering at University of California, Santa Barbara, \texttt{\{jjonas,bamieh\}@ucsb.edu}. This work is partially supported by NSF award ECCS-2453491.}}

\maketitle

\begin{abstract}
    We consider model reduction of large-scale multi-input, multi-output (MIMO) systems using tangential interpolation in the frequency domain. Our scheme is related to the recently-developed Adaptive Antoulas--Anderson (AAA) algorithm, which is an iterative algorithm that uses concepts from the Loewner framework. Our algorithm has two main features. The first is the use of freedom in interpolation weight matrices to optimize a proxy for an \(H_2\) system error. The second is the  use of low-rank interpolation, where we iteratively add low-order interpolation data based on several criteria including minimizing maximum errors.  We show there is freedom in the interpolation point selection method, leading to multiple algorithms that have trade-offs between computational complexity and approximation performance.  We prove that a weighted \(H_2\) norm of a representative error system is monotonically decreasing as interpolation points are added.  Finally, we provide computational results and some comparisons with prior work, demonstrating performance on par with standard model reduction methods.    
\end{abstract}

\begin{IEEEkeywords}
    Adaptive Antoulas--Anderson algorithm, computational methods, linear systems, Loewner framework, model order reduction, moment matching, optimization, rational interpolation, reduced order modeling
\end{IEEEkeywords}

\section{Introduction}
Interpolatory model reduction methods have become an increasingly important tool in model-order reduction for many reasons, particularly their computational benefits~\cite{rafiq2022model}.  This is particularly important in the context of large-scale linear systems that may be very sparse, for example in systems derived from finite-difference or finite-element discretizations of phenomena represented by partial differential equations like fluid or structure mechanics~\cite{Lassila14}~\cite{Hetmaniuk12}.   There are two important sub-classifications of interpolatory methods, which are model matching methods and rational Krylov methods~\cite{rafiq2022model}.  Essentially, moment matching methods are a group of methods that create systems which match the given system's transfer function (or its derivatives) at a selected number of points in the complex plane.   

Moment matching for linear systems can be thought of as fitting a rational interpolant to frequency data, thus algorithms that construct rational interpolants from data can be used for model order reduction.  One such example of this is the Loewner framework originally introduced by Antoulas and Anderson~\cite{Antoulas86}, and was expanded upon thereafter~\cite{antoulas2017tutorial}.  From frequency response data, one can construct a state-space representation by forming certain matrices, namely the so-called Loewner matrix.  We will expand on this more when we talk about the Loewner framework for MIMO systems.  

Unlike algorithms where the interpolation points are pre-selected, the AAA (Adaptive Antoulas--Anderson) algorithm~\cite{Nakatsukasa_2018} introduced by Trefethen and coauthors incorporates a recursive algorithm for selecting the next interpolation point at each step. It can be thought of as a greedy version of the scalar Loewner framework that iteratively builds up the interpolant point-by-point until the resulting rational function approximates the input data well.  At each step, the AAA algorithm selects the next interpolation point to be the point at which a certain error is maximized, then readjusts free parameters in the system to further minimize the error across the rest of the points.  Though introduced in the context of rational approximation, given a large set of points in the complex plane and the frequency response evaluated at those points, the AAA algorithm has proved itself able to generate satisfactory reduced order transfer functions for SISO systems.  

For both the Loewner framework and the AAA algorithm, there are generalizations that apply to MIMO systems.  For example, the Loewner framework has been generalized to MIMO systems with the use of tangential interpolation functions, which may be left- or right-tangential interpolants~\cite{antoulas2017tutorial}~\cite{Antoulas2020}.  A function \(R\) is said to be a \textit{right-tangential interpolating function} if it satisfies the property \(R(s_i) V_i = G(s_i) V_i\) at each interpolation point \(s_i\), where \(V_i\) is a full column rank matrix.  Similarly, a function is said to be a \textit{left-tangential interpolating function} if it satisfies the property \(U_i^* R(s_i) = U_i^* G(s_i)\), where \(U_i\) is a full column rank matrix.  With the generalized Loewner framework, a reduced order system can be constructed by forming two Loewner matrices and performing a singular value decomposition (SVD) on each matrix, from which one can derive the coefficient matrices for a state space realization of a low order approximation.  This is computationally efficient and thus is very useful for large, sparse systems.  Though we will not discuss it in this paper, a related algorithm, known as the iterative rational Krylov method (IRKA), is an iterative Loewner framework algorithm that shifts the interpolation points iteratively until they satisfy some \(H_2\) optimality conditions~\cite{Gugercin2008}~\cite{Antoulas2020}.  

There have been several attempts at adapting the AAA algorithm for use in MIMO system reduction.  For example, the block-AAA algorithm adjusts the interpolant to use matrix valued functions and ``block'' interpolates (i.e. fully interpolates) the matrix-valued functions at each interpolation point~\cite{yu2023}.  This system's order grows by either the number of inputs or the number of outputs every iteration, as shown in our prior work~\cite{Jonas2024}.   Benner et al.~\cite{Benner23} introduced tangential-AAA, which uses tangential-interpolation from the Loewner framework to create a system that grows by a constant every iteration and thus performs better than the block-AAA algorithm~\cite{Benner23}.  

In this work, we aim to develop an algorithm based on our prior work and other AAA-like algorithms that is suitable for model reduction of MIMO linear, time-invariant (LTI) systems.   Available model reduction methods based on the AAA algorithm require a large set of evaluation points, and their performance is closely linked to both the number of points and the distribution of said points.  The resulting systems from these methods tend to be unstable, even when the system to be approximated is stable.  In  prior work~\cite{Jonas2024}, we began to address this with the system-AAA algorithm, which may be thought of as the limiting case of the block-AAA algorithm with an infinite number of evaluation points on the imaginary axis, eliminating the need for these evaluation points.  This algorithm gave satisfactory results when reducing single input, single output (SISO) systems, but performed relatively poorly when reducing MIMO systems, especially when the number of inputs and/or outputs is large.   This begat the low-rank approximation algorithm introduced at the end of our prior work, which was a cursory look at some of the ideas we flesh out in this manuscript.

In this paper we present an algorithm that generalizes and improves upon our prior work while laying out important theoretical implications.  The family of algorithms we introduce efficiently produces well-performing models in the MIMO case with good empirical stability characteristics and performance which matches standard model reduction algorithms.  We summarize these results in the following Section~\ref{formulation.sec}.  Section~\ref{ss_real.sec} carefully catalogs the transfer functions and state space realizations for systems that appear in this paper, as there are important observability/controllability and MIMO pole/zero cancellation issues to keep track of.  Sections~\ref{weights.sec} and~\ref{inter_points.sec} form the bulk of our contribution. In Section~\ref{weights.sec} we show how the free parameters in the interpolant (the weight matrices) can be used to minimize a measure of \(H_2\) system error. The optimization problem has an explicit solution through an eigenvalue problem involving the weight matrices and certain system Gramians.  We also show that this system error is monotonically decreasing across iterations.  Section~\ref{inter_points.sec} introduces three variations of the interpolation point selection algorithm, as well as the selection of approximation rank at said points.  Section~\ref{computations.sec} gives computational examples of the performance of our algorithms as well as other model reduction algorithms and provides some commentary on the results and algorithms as a whole.  

\section{Problem formulation and main results}							\label{formulation.sec}

We use the following standard notation for state-space realizations of a MIMO, linear time-invariant (LTI), noise-free system and its corresponding transfer function
\begin{equation}
	G: \left\{ \begin{aligned}	 \dot{x} =& Ax + Bu \\ y =& Cx + Du		\end{aligned} \right.
	\hstm \Leftrightarrow \hstm  
	G(s) = \brac{\begin{array}{c|c}A & B \\ \hline C & D\end{array}}.			\label{G_defs.eq}
\end{equation}
This will generally be a $p\times q$ system with $n$ states.  Going forward, we will assume this system does not have any purely imaginary poles.  

A left-tangential barycentric interpolant is a transfer function of the form 
\begin{multline}
            R(s) := M^{\sm1}(s)N(s) 
            :=\scalemath{0.7}{\p{I + \sum_{k=1}^\ell \frac{W_k U_k^*}{s-\jw_k}}^{\sm1}\hspace{-0.3em}\p{D + \sum_{k=1}^\ell \frac{W_k \Sigma_k V_k^*}{s-\jw_k}}} , 		\\
            := \p{\weights\,  \mathcal{M}(s)}\inv \weights \, \mathcal{N}(s), \label{eq:rtan_int}
\end{multline}
where  \(W_k\in\comp^{p\times r_k}\) are full-column-rank free parameters. The set $\lcb j\omega_1, \ldots, j\omega_\ell \rcb$ are the interpolation points (on the imaginary axis), and \(U_k\), \(\Sigma_k\), and \(V_k\) form a $r_k$-rank approximation of \(G(\jw_k) = U_k \Sigma_k V_k^*\) for some pre-specified $r_k$ (obtained from the SVD of $G(\jw_k)$).  
Theorem ~\ref{thm:tan_int}~\cite{antoulas2017tutorial,Benner23,Jonas2024}  below states that for all weight matrices, 
$R(s)$ above is a \textit{left-interpolant} of $G(s)$ at the points $\lcb \jw_k \rcb$, i.e. \(U_k^*G(\jw_k) = U_k^*R(\jw_k)\) for each \(\omega_k\). 

The second line in~\eqref{eq:rtan_int} states that the weights in \(M\) and \(N\) can be factored out into a 
matrix 
\begin{equation}
    \bbW := \begin{bsmallmatrix}W_1 & \cdots & W_\ell\end{bsmallmatrix}\in\comp^{p\times r}\
\end{equation}
 containing all the weights, and  forming two block column systems \(\mathcal{M}\in\comp^{(p+r)\times p}\) and \(\mathcal{N}\in\comp^{(p+r)\times q}\), where 
 \(r = \sum_{k=1}^\ell r_k\).   This is detailed in Theorem~\ref{thm:R_ss} which also shows that \(R\) has a state-space representation with state dimension \(r\) that can be constructed from matrices consisting only of interpolation data and the matrix of free parameters.   

To use the interpolant~\eqref{eq:rtan_int} as a reduced order model, two questions must be answered.  First, how should one select the interpolation frequencies and their respective approximation ranks?  Additionally, what methods should be used to determine the values of the weight matrices?  The algorithm that we have developed is an iterative algorithm loosely based on the AAA algorithm that builds up a model by selecting one new interpolation frequency per iteration, then readjusts the weights by solving an optimization problem which minimizes the error across the frequency spectrum.  We will view these steps in more detail in their own sections, but we will begin to address them now, starting with the weight optimization.

Supposing a set of interpolation frequencies has already been selected.  We consider a weighted \(H_2\) norm which optimizes the free weight  parameters in~\eqref{eq:rtan_int} to minimize the squared error across the frequency spectrum, weighted by \(M\):
\[\min_{\bbW} \int_0^\infty \norm{M(\jw)(R(\jw)-G(\jw))}_F^2  \, \rmd\omega. \]
Due to the manner in which the weights appear in~\eqref{eq:rtan_int}, this optimization problem can be stated as follows.  
\begin{problem} \label{pr:H2}
    \[\min_{\bbW} \norm{\begin{bsmallmatrix}I & \bbW\end{bsmallmatrix}(\smash{\underbrace{\mathcal{N}-\mathcal{M}G}_{H}})}_{H_2}^2.\]
\end{problem}

\bigskip\noindent
As shown in Theorem~\ref{thm:H_real}, The system \(H\) appearing in Problem~\ref{pr:H2} has the state space representation
\begin{equation}
    H := \mathcal{N} - \mathcal{M}G = \brac{\begin{array}{c|c}A & B \\ \hline -C & 0 \\ \sC & 0\end{array}}, \; \sC = \begin{bsmallmatrix}C_1 \\ : \\ C_\ell\end{bsmallmatrix} \in \comp^{r \times n},
\end{equation}
where
\[C_k = \begin{cases}U_k^* C(\jw_k I - A)\inv \quad \text{if \(G\) complex or \(\omega_k = 0\)} \\ \begin{bmatrix}\Re(U_k^* C(\jw_k I - A)\inv) \\ -\Im(U_k^* C(\jw_k I - A)\inv)\end{bmatrix} \; \text{o.w.}\end{cases},\]
thus we are able to recast Problem~\ref{pr:H2} as the following trace minimization problem. 
\begin{problem} \label{pr:tr}
    \begin{align*}
        &\min_{\bbW} \gamma_\ell(\bbW), \; \text{where} \\
        &\gamma_\ell(\bbW) := \tr\p{\begin{bsmallmatrix}I & \bbW\end{bsmallmatrix} \begin{bsmallmatrix}C\Theta C^* & -C\Theta \sC^* \\ -\sC \Theta C^* & \sC \Theta \sC^*\end{bsmallmatrix} \begin{bsmallmatrix}I \\ \bbW^*\end{bsmallmatrix}}
    \end{align*}
\end{problem}

\bigskip\noindent
where \(\Theta\) is the controllability Gramian of the original system.  Under mild assumptions, we show in Theorem~\ref{thm:What_sol} that Problem~\ref{pr:tr} and thus Problem~\ref{pr:H2} have explicit formulae for the global minimizer
\[\bbW_{\ell,\min} = C\Theta \sC^* \p{\sC \Theta \sC^*}\inv\]
for \(r \leq n_{\min}\), the size of the minimal realization of the input system.  Futhermore, we show that with this choice on \(\bbW\), the error monotonically decreases across subsequent iterations, i.e. \(\gamma_{\ell+1}(\bbW_{\ell+1,\min}) \leq \gamma_\ell(\bbW_{\ell,\min})\), and reaches zero once \(r \geq n_{\min}\).  Importantly, this implies that the \(H_2\) error norm between the input system and the reduced-order system also reaches zero once the reduced-order system's size reaches the size of a minimal realization of the input system. 

This weighted norm will decrease across iterations as long as each interpolation frequency is distinct, thus there is freedom in how we choose the next interpolation frequency each iteration.  We introduce three methods, with the first being interpolating at the frequency of maximum error, i.e. the frequency at which the \(L_\infty\) norm of the error system is attained.  In this case, an \(H_\infty\) bisection algorithm is used to find the frequency at every step.  This is computationally expensive, however, so we introduce two other methods, which we term the gridded and random approaches.  The three aforementioned methods form three algorithms that trade off computational complexity, approximation performance, and determinism.

\section{State space realizations}									\label{ss_real.sec}
In our prior work, we began investigating what we termed low-rank interpolation~\cite{Jonas2024}, which uses a left-tangential interpolant.  In this work, we use this interpolant as a basis to derive a left-tangential interpolating system and a variant which has real-valued coefficients.   We note that similar descriptor realizations for tangential interpolants appear in Antoulas' work, but a more concrete connection between these descriptor systems and barycentric interpolants first appeared in~\cite{Benner23}.  Following the derivation of the system representation and its real-valued counterpart, we provide a summary which unifies these two realizations into a common form.  We then end the section by finding a state-space realization of \(H := \mathcal{N} - \mathcal{M}G\), which is used extensively in the optimization step.  

\subsection{Left-tangential interpolating systems}
We start by considering a left-tangential barycentric interpolant shown in equation~\eqref{eq:rtan_int}, which was introduced at the end of our previous work~\cite{Jonas2024}.  We show in Theorem~\ref{thm:tan_int} that this function satisfies the left-interpolation property of \(U_k^* G(\jw_k) = U_k^* R(\jw_k)\) and that \(G(s)\) tends towards \(D\) as \(s\) approaches infinity.

\newcommand*{\ampintparamsstatement}{Consider the system \(G\), let \(\curly{\omega_i}_{i=1}^\ell\) be a set of distinct real numbers, and let \(\curly{r_i}_{i=1}^\ell\) be a set of ranks.  Consider the rank-\(r_i\)-truncated SVD of \(G(\cdot)\) at each \(j\omega_i\) where \(r_i\) is less than or equal to the rank of \(G(j\omega_i)\), i.e. \(U_i\Sigma_i V_i^*\) forms the best rank-\(r_i\) approximation of \(G(j\omega_i)\). Under these assumptions, we have \(U_i\in\comp^{p\times r_i}\), \(U_i^*U_i = I\), diagonal full-rank \(\Sigma_i\in\real^{r_i\times r_i}\), \(V_i\in\comp^{q\times r_i}\), and \(V_i^*V_i = I\).}
\begin{assumption} \label{amp:int_params}
    \ampintparamsstatement
\end{assumption}

\newcommand*{\thmtanintstatement}{Consider the assumptions in Assumptions 1 and the interpolant \(R\) from equation~\eqref{eq:rtan_int},  where each \(W_i\in\comp^{p\times r_i}\), \(i=1,\ldots, \ell\) is a full-column-rank free parameter.  Then, \(R\) left-interpolates \(G\) at each \(j\omega_i\) for any valid weight \(W_i\in\comp^{p\times r_i}\), and \(\lim_{|s|\to\infty} R(s) = D\).}
\begin{theorem} \label{thm:tan_int}
    \thmtanintstatement
\end{theorem}

Now, we will provide state-space realizations for the terms appearing equation~\eqref{eq:rtan_int}.  The system \(R(s)\) is the product of two systems \(M\inv\), and \(N\).  Both \(M\) and \(N\) contain terms which are left-multiplied by a weight, thus they may each be written as a product between the weight matrix \(\bbW\) and a block-column system, i.e. \(M = \weights\mathcal{M}\) and \(N = \weights\mathcal{N}\), where \(\bbW = \begin{bsmallmatrix}W_1 & \cdots & W_\ell\end{bsmallmatrix} = \begin{bsmallmatrix}I & \bbW\end{bsmallmatrix}\), 
\begin{align*}
    \mathcal{M}(s) &:= \begin{bsmallmatrix}I \\ \mathcal{M}_1(s) \\ : \\ \mathcal{M}_\ell(s)\end{bsmallmatrix}, \; \text{and} \; \mathcal{N}(s) := \begin{bsmallmatrix}D \\ \mathcal{N}_1(s) \\ : \\ \mathcal{N}_\ell(s)\end{bsmallmatrix}.
\end{align*}
Each sub-system in \(\mathcal{M}\) and \(\mathcal{N}\), 
\[\mathcal{M}_k := \frac{U_k^*}{s-\jw_k}, \; \text{and }\; \mathcal{N}_k := \frac{\Sigma_k V_k^*}{s-\jw_k},\]
have readily apparent state-space realizations, namely
\begin{equation} \label{eq:comp_mnk}
    \mathcal{M}_k = \brac{\begin{array}{c|c}j\omega_k I & U_k^* \\ \hline I & 0\end{array}}, \; \mathcal{N}_k = \brac{\begin{array}{c|c}j\omega_k I & \Sigma_k V_k^* \\ \hline I & 0\end{array}}.
\end{equation}

State-space realizations for \(\mathcal{M}\), \(\mathcal{N}\), \(M\), and \(N\) can now be constructed by concatenating each of these systems.  Indeed, 
\begin{align}
    \mathcal{M} &= \brac{\begin{array}{c|c}\mathcal{A} & \mathcal{B}_m \\ \hline 0 & I \\ I & 0\end{array}}, \; & \mathcal{N} &= \brac{\begin{array}{c|c}\mathcal{A} & \mathcal{B}_n \\ \hline 0 & D \\ I & 0\end{array}}, \label{eq:MNcal_ss} \\
    M &= \brac{\begin{array}{c|c}\mathcal{A} & \mathcal{B}_m \\ \hline \bbW & I\end{array}}, \; & N &= \brac{\begin{array}{c|c}\mathcal{A} & \mathcal{B}_n \\ \hline \bbW & D\end{array}} \label{eq:MN_ss}.
\end{align}
where
\[\cA := \mathrm{diag}(\jw_1 I, \ldots, \jw_\ell I), \; \begin{bsmallmatrix}\cBM & \cBM\end{bsmallmatrix} := \begin{bsmallmatrix}U_1^* & \Sigma_1 V_1^* \\ : & : \\ U_\ell^* & \Sigma_\ell V_\ell^*\end{bsmallmatrix}\]
are matrices that consist only of interpolation data.  From these realizations, it is possible to find a realization of \(R\) as shown in Theorem~\ref{thm:R_ss}.  The realizations of \(M\) and \(N\) have poles on the imaginary axis, but due to pole-zero cancellations in \(R\), we show that the realization of \(R\) does not.  Importantly, this shows it is possible to form a state-space realization of the system by forming three matrices from interpolation data (\(\cA\), \(\cBM\), and \(\cBN\)) and one matrix \(\bbW\) from the free parameters.  

\begin{theorem} \label{thm:R_ss}
    Let \(M\) and \(N\) be defined as in equation~\eqref{eq:MN_ss} and let \(R = M\inv N\).  Then, 
    \begin{equation} \label{eq:R_ss}
        R = \brac{\begin{array}{c|c}\cA-\cBM\bbW & \cBN - \cBM D \\ \hline \bbW & D \end{array}}.
    \end{equation}
\end{theorem}
\begin{proof}
    Because \(M\) has an invertible feedthough matrix, we can find a state space representation for \(M\inv\), which is
    \[M\inv = \brac{\begin{array}{c|c}\cA - \cBM  \bbW & \cBM  \\ \hline - \bbW & I \end{array}}.\]
    Thus we can find a combined state space realization for \(R\) by cascading the realizations for \(M\inv\) and \(N\), giving
    \[R = \brac{\begin{array}{cc|c}\cA & 0 & \cBN \\ \cBM \bbW & \cA - \cBM \bbW & \cBM D \\ \hline \bbW & -\bbW & D\end{array}}.\]
    Performing a coordinate change with the transformation matrix \(\frac{1}{2}\begin{bsmallmatrix}I & I \\ -I & I\end{bsmallmatrix}\) yields
    \[R = \brac{\begin{array}{cc|c}\cA-\cBM\bbW & 0 & \cBN - \cBM D \\ \cBM \bbW & \cA & \cBN + \cBM D \\ \hline \bbW & 0 & D\end{array}}.\]
    From this it is clear that the second block mode is unobservable and can be eliminated, yielding equation~\eqref{eq:R_ss}.
\end{proof}

\subsection{Enforcing real coefficients}
In the case that \(G\) has real coefficients, we see the realization of \(R\) shown in equation~\eqref{eq:R_ss} will not have real coefficients in general due to the presence of complex values in \(\cA\), \(\cBM\), and \(\cBM\).  Thus, we will consider an interpolation strategy that guarantees a realization of \(R\) with real coefficients exists.  The strategy we consider is, when interpolating at a non-zero frequency \(\omega\), to interpolate at both \(\jw\) and \(-\jw\).  We lay out an argument for why we consider this, and show that with this strategy, there is a specific choice of weights which yields a resulting system with real coefficients.   

When \(G\) has real coefficients, it is the case that \(\overline{G}(\jw) = G(-\jw)\) for all \(\omega\in\real\) that isn't a pole of \(G\).  Similarly, supposing \(R\) has real coefficients, then it must be the case that \(\overline{R}(\jw) = R(-\jw)\) for all \(\omega\in\real\).  Consider interpolating at a point \(\jw_k\) ensuring \(U_k^* G(\jw_k) = \Sigma_k V_k^* = U_k^* R(\jw_k)\).  Under the assumption that \(R\) has real coefficients, it follows that
\[\overline{U}_k^* G(-\jw_k) = \Sigma_k \overline{V}_k^* = \overline{U}_k^* R(-\jw_k),\]
which implies that, when \(G\) and \(R\) have real coefficients, there is an additional interpolation constraint that is always satisfied for \(-\jw_k\).   Going forward, we will thus consider interpolating at both \(\jw_k\) and \(-\jw_k\) when \(\omega_k > 0\) and check conditions on the weights to ensure realness.  Importantly, when \(\jw_k = 0\), \(G(0)\) and thus the corresponding data in \(\cA\), \(\cBM\), and \(\cBN\) are real, implying we only need to consider the case where \(\omega_k>0\).

Consider the pair of terms in \(M(s)\) resulting in interpolating at \(\pm\jw_k\) where \(\omega_k > 0\), which could be written as
\begin{align*}
    &\frac{W_{k,1} U_k^*}{s - \jw_k} + \frac{W_{k,2} \overline{U}_k^*}{s + \jw_k} = \\
    &\frac{(W_{k,1}U_k^* + W_{k,2}\overline{U}_k^*)s + (W_{k,1}U_k^* - W_{k,2}\overline{U}_k^*)\jw_k}{s^2+\omega_k^2}.
\end{align*}
We require the coefficients in the numerator to be real independent of the values of \(U_k\) and \(\omega_k\), implying \(\Im(W_{k,1}U_k^* + W_{k,2}\overline{U}_k^*) = 0\) and \(\Re(W_{k,1}U_k^* - W_{k,2}\overline{U}_k^*) = 0\).  It turns out the only relation that satisfies these requirements is \(W_{k,1} = \overline{W}_{k,2}\).  With this in mind, the pair can now be written as 
\[\frac{2s \Re(W_{k,1} U_k^*) - 2\omega_k \Im(W_{k,1} U_k^*)}{s^2+\omega_k^2}.\]
Now we expand out both terms and factor out \(\Re(W_{k,1})\) and \(\Im(W_{k,1})\) from the expression, which forms the matrix \(W_k := 2\begin{bsmallmatrix}\Re(W_{k,1}) & \Im(W_{k,2})\end{bsmallmatrix}\), which is a real-valued free parameter.  We are left with
\[\mathcal{M}_k(s) := \frac{\begin{bsmallmatrix}\Re(U_k\trans)s + \Im(U_k\trans)\omega_k \\ \Im(U_k\trans)s -\Re(U_k\trans)\omega_k\end{bsmallmatrix}}{s^2+\omega_k^2}\]
Following the same procedure, the pairs of terms in \(N(s)\) for \(\omega>0\) can be combined which produces
\[\mathcal{M}_k(s) := \frac{\begin{bsmallmatrix}\Sigma_k\Re(V_k\trans)s + \Sigma_k\Im(V_k\trans)\omega_k \\ \Sigma_k\Im(V_k\trans)s - \Sigma_k\Re(V_k\trans)\omega_k\end{bsmallmatrix}}{s^2+\omega_k^2}.\]
These systems have the realizations 
\begin{align}
    \mathcal{M}_k &= \scalemath{0.8}{\brac{\begin{array}{cc|c}0 & \omega_k I & \Re(U_k\trans) \\ -\omega_k I & 0 & \Im(U_k\trans) \\ \hline I & 0 & 0 \\ 0 & I & 0\end{array}}}, \; \text{and} \label{eq:mk_real} \\
    \mathcal{N}_k &= \scalemath{0.8}{\brac{\begin{array}{cc|c}0 & \omega_k I & \Sigma_k \Re(V_k\trans) \\ -\omega_k I & 0 & \Sigma_k \Im(V_k\trans) \\ \hline I & 0 & 0 \\ 0 & I & 0\end{array}}}. \label{eq:nk_real}
\end{align}
\begin{remark} \label{rem:mnk_eq}
    As an aside, let the systems \(\mathcal{M}_k\) and \(\mathcal{N}_k\) from equation~\eqref{eq:comp_mnk} be \(\mathcal{M}_k^c\) and \(\mathcal{N}^c_k\) respectively.  The systems \(\mathcal{M}_k\) and \(\mathcal{N}_k\) from equations~\eqref{eq:mk_real} and~\eqref{eq:nk_real} can be written as
    \[\mathcal{M}_k = \begin{bmatrix}\Re(\mathcal{M}^c_k) \\ -\Im(\mathcal{M}^c_k)\end{bmatrix} \; \text{and} \; \mathcal{N}_k = \begin{bmatrix}\Re(\mathcal{N}^c_k) \\ -\Im(\mathcal{N}^c_k)\end{bmatrix}\]
    respectively.  We denote \(\Re(G)\) and \(\Im(G)\) to be the systems defined by \(\Re(G) = \frac{1}{2}(G + \overline{G})\) and \(\Im(G) = \frac{1}{2j}(G - \overline{G})\).  State space representations exist for these systems, namely
\[ \begin{bmatrix}\Re(G) \\ \Im(G)\end{bmatrix} = \brac{\begin{array}{cc|c}\Re(A) & -\Im(A) & \Re(B) \\ \Im(A) &\Re(A) & \Im(B) \\ \hline \Re(C) & -\Im(C) & \Re(D) \\ \Im(C) & \Re(C) & \Im(D)\end{array}}.\]
\end{remark}

Thus, in the case that \(G\) has real coefficients, we see we can interpolate at both \(\jw_k\) and \(-\jw_k\) when \(\omega_k>0\) and choose the resulting weights to guarantee that \(\mathcal{M}_k\) and \(\mathcal{N}_k\) have a realization with real coefficients.  In the case that \(\omega_k = 0\), those realizations are already real.  Thus, assuming \(\bbW\) is also real, it is also the case that \(R\) is real.  We now make an important observation.  These new \(\mathcal{M}_k\) and \(\mathcal{N}_k\) systems are functionally the same as the ones seen in equation~\eqref{eq:comp_mnk}: they are left-multiplied by a free-parameter weight, then summed to form \(M\) and \(N\).  The difference is in the form of the \(A\) and \(B\) matrices.  Thus, we can succinctly describe both situations as shown below.  

Given a system \(G\), a set of frequencies \(\curly{\omega_k}_{k=1}^\ell\) each of which is a real number and non-negative if \(G\) has real coefficients, and a set of integer approximation ranks \(\curly{r_k}_{k=1}^\ell\) which are between \(1\) and \(p\), we can construct a (real) left-tangential interpolating system by considering the results stated above.  First, for each frequency \(\omega_k\), we find the corresponding response \(G(\jw_k)\) and let \(U_k\Sigma_k V_k^*\) be its best \(r_k\) approximation constructed from the SVD, ensuring \(r_k\) does not exceed the rank of \(G(\jw_k)\).  Now, from equations~\eqref{eq:comp_mnk}, \eqref{eq:mk_real}, and \eqref{eq:nk_real}, we let 
\begin{align}
    \begin{bsmallmatrix}\mathcal{M}_k & \mathcal{N}_k\end{bsmallmatrix} &= \brac{\begin{array}{c|cc}\cA & \mathcal{B}_{{\scriptscriptstyle M},k} & \mathcal{B}_{{\scriptscriptstyle N},k} \\ \hline I & 0 & 0\end{array}}, \; \text{where} \\
    \cA_k &= \begin{cases}\jw_k I \quad \text{if \(\omega_k = 0\) or \(G\) complex} \\
    \begin{bsmallmatrix}0 & \omega_k I \\ -\omega_k I & 0\end{bsmallmatrix} \; \text{o.w.}\end{cases} \\
    \begin{bsmallmatrix}\mathcal{B}_{{\scriptscriptstyle M},k} & \mathcal{B}_{{\scriptscriptstyle N},k}\end{bsmallmatrix} &= \begin{cases} \begin{bsmallmatrix}U_k^* & \Sigma_k V_k^*\end{bsmallmatrix} \quad \mathrlap{\text{if \(\omega_k = 0\) or \(G\) complex}} \\ \begin{bsmallmatrix}\Re(U_k\trans) & \Sigma_k \Re(V_k\trans) \\ \Im(U_k\trans) & \Sigma_k \Im(V_k\trans)\end{bsmallmatrix} \; \text{o.w.}\end{cases}
\end{align}
We redefine the following matrices 
\[\cA := \begin{bsmallmatrix}\cA_1 & & \\ & \scalemath{0.5}{\ddots} & \\ & & \cA_\ell\end{bsmallmatrix}, \; \begin{bsmallmatrix} \cBM & \cBN\end{bsmallmatrix} := \begin{bsmallmatrix}\mathcal{B}_{{\scriptscriptstyle M},1} & \mathcal{B}_{{\scriptscriptstyle N},1} \\ : & : \\ \mathcal{B}_{{\scriptscriptstyle M},\ell} & \mathcal{B}_{{\scriptscriptstyle N},\ell}\end{bsmallmatrix},\]
and note that state-space realizations for \(\mathcal{M}\), \(\mathcal{N}\), \(M\), \(N\), and \(R\) are given in equations~\eqref{eq:MNcal_ss}, \eqref{eq:MN_ss}, and \eqref{eq:R_ss} respectively.  In terms of sizes, we note that the number of states \(n_k\) of \(\mathcal{M}_k\) or \(\mathcal{N}_k\) is \(r_k\) if \(\omega_k = 0\) or if \(G\) has complex coefficients, or \(2r_k\) otherwise.  We will call the sum of these states \(r\), thus all of \(\mathcal{M}\), \(\mathcal{N}\), \(M\), \(N\), and \(R\) have \(r\) states.  Additionally, we note \(\mathcal{M}(s)\in\comp^{p+r\times p}\), \(\mathcal{N}(s)\in\comp^{p+r\times q}\), and \(\bbW\in\comp^{p\times r}\). 

\subsection{The system $H$}

The system \(H := \mathcal{N} - \mathcal{M}G\) appears in Problem~\ref{pr:H2}, and we show in the next section the well-posedness depends on the properties of this system.  Here, we show that \(H\) admits a very nice representation in that its \(A\) and \(B\) matrices are identical to that of the input system.

\newcommand*{\thmHrealstatement}{Suppose \(H = \mathcal{N} - \mathcal{M}G\), where 
    \[\mathcal{M} = \begin{bsmallmatrix}I \\ M_1(s) \\ : \\ M_\ell(s)\end{bsmallmatrix}, \; \mathcal{N} = \begin{bsmallmatrix}D \\ N_1(s) \\ : \\ N_\ell(s)\end{bsmallmatrix}, \; \text{and}\]
    \[\begin{bsmallmatrix}M_k & N_k\end{bsmallmatrix} = \begin{cases}\scalemath{0.7}{\brac{\begin{array}{c|cc}j\omega_k I & U_k^* & \Sigma_k V_k^* \\ \hline I & 0 & 0\end{array}}} \; \text{if \(G\) complex or \(\omega_k = 0\)} \\ \scalemath{0.7}{\brac{\begin{array}{cc|cc}0 & \omega_k I & \Re(U_k\trans) & \Sigma_k \Re(V_k\trans) \\ -\omega_k I & 0 & \Im(U_k\trans) & \Sigma_k\Im(V_k\trans) \\ \hline I & 0 & 0 & 0 \\ 0 & I & 0 & 0\end{array}}} \; \text{o.w.}\end{cases}\]
    Then the system \(H\) has the realization 
    \begin{equation} \label{eq:H_real}
        H = \brac{\begin{array}{c|c}A & B \\ \hline -C & 0 \\ \phantom{-}\sC & 0\end{array}}, \; \sC = \begin{bsmallmatrix}C_1 \\ : \\ C_\ell\end{bsmallmatrix}\in\comp^{r\times n},
    \end{equation}
    and
    \[C_k = \begin{cases}U_k^* C(\jw_k I - A)\inv \quad \text{if \(G\) complex or \(\omega_k = 0\)} \\ \begin{bmatrix}\Re(U_k^* C(\jw_k I - A)\inv) \\ -\Im(U_k^* C(\jw_k I - A)\inv)\end{bmatrix} \; \text{o.w.}\end{cases}\]}
\begin{theorem} \label{thm:H_real}
    \thmHrealstatement 
\end{theorem}

\section{Weight Optimization}										\label{weights.sec}
Using the interpolating systems that were introduced in the previous section, we now turn to the optimization step of the algorithm, in which we assume a valid set of interpolation points has been chosen and we optimize a proxy for the error over all frequencies.  Here we derive an explicit solution for the weights that minimize a weighted \(H_2\) norm of the error system.  We also provide bounds on said \(H_2\) norm and show convergence to the input system when the number of states reaches the number of jointly observable and controllable states in the input system.

Because adjusting the weights affects the frequency response across the frequency spectrum, careful selection of the weights can be used to decrease the approximation. We would like to choose the weights \(\bbW\) to minimize the mean squared error between \(R\) and \(G\), and ideally, we would try to optimize the \(H_2\) norm of \(R-G\), supposing both systems are stable.  This is a difficult problem due to the appearance of \(\bbW\) in both \(M\inv\) and \(N\), however, so instead we will focus on choosing \(\bbW\) such that it solves the weighted \(H_2\) minimization
\begin{align*}
    \min_{\bbW} \norm{M[\bbW](R[\bbW]-G)}_{H_2}^2 &= \min_{\bbW} \norm{N[\bbW]-M[\bbW]G}_{H_2}^2  \\
    &= \min_{\bbW} \norm{\begin{bsmallmatrix}I & \bbW\end{bsmallmatrix}(\smash{\underbrace{\mathcal{N} - \mathcal{M}G}_{H}})}_{H_2}^2,
\end{align*}
i.e. Problem~\ref{pr:H2}.  We later argue through computational examples that solving this optimization problem instead yields a system which still approximates the original system well in the \(H_2\) sense.  With some algebra, we see that this problem is equivalent to
\begin{align*}
    &\min_{\bbW} \int_0^\infty \tr\p{\begin{bsmallmatrix}I & \bbW\end{bsmallmatrix} H(\jw)H^*(\jw) \begin{bsmallmatrix}I \\ \bbW^*\end{bsmallmatrix}} \rmd\omega \\
    =&\min_{\bbW} \tr\p{\begin{bsmallmatrix}I & \bbW\end{bsmallmatrix} X \begin{bsmallmatrix}I \\ \bbW^*\end{bsmallmatrix}},
\end{align*}
where 
\begin{equation} \label{eq:min1}
    X = \int_0^{\mathrlap{\infty}} H(\jw) H^*(\jw) \rmd\omega. 
\end{equation}
For \(X\) to exist, \(H\) must have a zero feedthrough and \(H\) must not have any poles on the imaginary axis.  As shown in the previous section in Theorem~\ref{thm:H_real}, we see that \(H\) has a realization with zero feedthrough and has an \(A\) matrix which is identical to the \(A\) matrix of the input system \(G\).  Since we assume \(G\) has no poles on the imaginary axis, \(X\) exists, and a closed form for it can be constructed using controllability Gramian of the system \(G\).  Expanding \(H(\jw)\) with the new realization of \(H\) in equation~\eqref{eq:H_real}, we get 
\begin{align*}
    X &= \int_0^\infty \begin{bsmallmatrix}-C \\ \phantom{-}\sC \end{bsmallmatrix} (\jw I-A)\inv BB^* (\jw I-A)^{-*} \begin{bsmallmatrix}-C \\ \phantom{-}\sC \end{bsmallmatrix}^* \rmd \omega \\
    &= \begin{bsmallmatrix}-C \\ \phantom{-}\sC \end{bsmallmatrix} \p{\int_0^\infty \theta(\jw) \theta^*(\jw) \rmd \omega} \begin{bsmallmatrix}-C^* & \sC^*\end{bsmallmatrix},
\end{align*}
where \(\theta(s) = (sI-A)\inv B\).  This integral equals the controllability Gramian of \(G\), which we will denote \(\Theta\).  Therefore
\[X = \begin{bsmallmatrix}C\Theta C^* & -C\Theta \sC^* \\ -\sC\Theta C^* & \sC \Theta \sC^*\end{bsmallmatrix},\]
and our minimization problem becomes 
\begin{equation}
    \begin{split}
        &\min_{\bbW} \gamma_\ell(\bbW), \; \text{where} \\
        &\gamma_\ell(\bbW) = \tr \p{\begin{bsmallmatrix}I & \bbW\end{bsmallmatrix}\begin{bsmallmatrix}C\Theta C^* & -C\Theta \sC^* \\ -\sC\Theta C^* & \sC \Theta \sC^*\end{bsmallmatrix} \begin{bsmallmatrix}I \\ \bbW^*\end{bsmallmatrix}},
    \end{split}
\end{equation}
which we have previously denoted Problem~\ref{pr:tr}.  According to Theorem~\ref{thm:What_sol}, under mild assumptions, this problem always has a solution which is unique as long as the number of interpolation points is less than or equal to the number of jointly controllable and observable modes in \(G\).  
\begin{theorem} \label{thm:What_sol}
    Consider the minimization in Problem~\ref{pr:tr}, and let \(n_{\min}\) be the size of a minimal realization of \(G\).  If \(r\leq n_{\min}\), then the unique solution for \(\bbW\) is
    \[\bbW = C \Theta \sC^* \p{\sC \Theta \sC^*}\inv,\]
    with \(\sC\) defined as in equation~\eqref{eq:H_real}.  If \(r>n_{\min}\), partition \(\sC\) into \(\begin{bsmallmatrix}\widehat{\sC} \\ \widetilde{\sC}\end{bsmallmatrix}\) where \(\widehat{\sC}\in\comp^{n_{\min}\times n}\), and define \(P\in\comp^{r-n_{\min}\times n_{\min}}\) such that \(\widetilde{\sC} = P \widehat{\sC}\).  Then the minimizer is
    \[\bbW = \begin{bmatrix}C \Theta \widehat{\sC}^* \p{\widehat{\sC} \Theta \widehat{\sC}^*}\inv + FP & -F\end{bmatrix}\]
    for some free parameter \(F\in\comp^{p\times r-n_{\min}}\).  
\end{theorem}
\begin{proof}
    Expand \(\gamma_\ell(\bbW)\) to get 
    \[\gamma_\ell(\bbW) = \bbW \sC \Theta \sC^* \bbW^* - \bbW \sC \Theta C^* - C\Theta \sC^* \bbW^* + C\Theta C^*.\]
    Now, we the Gateaux derivative of \(\gamma_\ell\), which yields
    \begin{align*}
        \partial_{\bbW}\gamma_\ell(V) &= \D{}{\delta} \gamma_\ell (\bbW + \delta V) \Big|_{\delta = 0} \\
        &= \tr(V\sC \Theta \sC^* \bbW^* + \bbW \sC \Theta \sC^* V^* \\
        &- V\sC \Theta C - C \Theta \sC^* V^*). \\
        &= 2\tr(\bbW \sC \Theta \sC^* V^* - C \Theta \sC^* V^*).
    \end{align*}
    The stationary points of the minimization occur when \(\partial_{\bbW} \gamma_\ell(V) = 0\) for all \(V\), thus the solutions of the optimization must satisfy
    \begin{align}
        \tr(\bbW \sC \Theta \sC^* V^* - C \Theta \sC^* V^*) = 0 \nonumber \\
        \implies \bbW \sC \Theta \sC^* = C \Theta \sC^*. \label{eq:sol_sat}
    \end{align}
    Utilizing the results from Corollary~\ref{cor:Ct_full}, we make the substitutions of \(\sC\Theta\sC^* = \sC[\min]\Theta_{\min}\sC[\min]^*\) and \(C\Theta\sC^* = C_{\min}\Theta_{\min}\sC[\min]^*\), where the \(\min\) subscript denotes the use of a minimal realization of \(G\), meaning \(\Theta_{\min}>0\).  Also from Corollary~\ref{cor:Ct_full}, when \(r\leq n_{\min}\), \(\sC[\min]\) is full row rank, thus \(\sC[\min]\Theta_{\min}\sC[\min]^* > 0\) and \(\sC\Theta \sC^*\) is invertible.  This implies \(\bbW\) has a unique solution, namely
    \[\bbW = C\Theta \sC^* (\sC\Theta \sC^*)\inv.\]
    When \(r \geq n_{\min}\), we first partition \(\bbW\) and \(\sC\) into \(\begin{bsmallmatrix}\widehat{\bbW} & \widetilde{\bbW}\end{bsmallmatrix}\) and \(\begin{bsmallmatrix}\widehat{\sC} \\ \widetilde{\sC}\end{bsmallmatrix}\) respectively, where \(\widehat{\bbW}\in\comp^{p\times n_{\min}}\) and \(\widehat{\sC}\in\comp^{n_{\min}\times n}\).  The rows of \(\widetilde{\sC}\) are linearly dependent with the rows of \(\widehat{\sC}\) according to Corollary~\ref{cor:Ct_full}, thus we can factor \(\widehat{\sC}\) out to get \(\sC = \begin{bsmallmatrix}I \\ P\end{bsmallmatrix} \widehat{\sC}\) for some matrix \(P\).  Substituting this back into equation~\eqref{eq:sol_sat}, we get
    \begin{align*}
        \begin{bsmallmatrix}\widehat{\bbW} & \widetilde{\bbW}\end{bsmallmatrix} \begin{bsmallmatrix}I \\ P\end{bsmallmatrix} \widehat{\sC} \Theta \widehat{\sC}^* \begin{bsmallmatrix}I & P^*\end{bsmallmatrix} &= C \Theta \widehat{\sC}^* \begin{bsmallmatrix}I & P^*\end{bsmallmatrix} \\
        (\widehat{\bbW} + \widetilde{\bbW} P) \widehat{\sC} \Theta \widehat{\sC}^* &= C \Theta \widehat{\sC}^*.
    \end{align*}
    Thus,
    \[\widehat{\bbW} = C\Theta \widehat{\sC}^* (\widehat{\sC} \Theta \widehat{\sC}^*)\inv - \widetilde{\bbW} P\]
    where \(\widetilde{\bbW}\) is now a free parameter, which is re-labeled \(F\) in the theorem statement.    
\end{proof}
Now that we have closed form solutions to the optimization problem, we want to understand how the performance of the algorithm across iterations is determined by this choice on \(\bbW\).  Suppose both \(R\) and \(G\) are stable systems.  The optimization problem minimizes the \(H_2\) norm of \(N - MG\), which is equal to \(\gamma_\ell(\bbW_{\ell,\min})\) at iteration \(\ell\).  Thus, understanding how \(\gamma\) behaves across iterations gives insight to the overall performance of the algorithm.  Theorem~\ref{thm:norm_behavior} answers a few of these questions.  We see that \(\gamma\) is bounded above by
\[\norm{\scalemath{0.8}{\brac{\begin{array}{c|c}A & B \\ \hline C & 0\end{array}}}}_{H_2}\]
and that it is monotonically decreasing across iterations until its size equals the number of jointly controllable and observable modes of the original system, in which case \(\gamma\) is zero.  We show then that \(R = G\) (up to coordinate changes).  We have observed in our numerical examples that the true approximation error, \(\norm{R-G}_{H_2}\), on average decreases at the same rate, suggesting that there may be an inequality relating these two norms.  
\begin{theorem} \label{thm:norm_behavior}
    Consider the stable system \(G = \scalemath{0.6}{\brac{\begin{array}{c|c}A & B \\ \hline C & D\end{array}}}\) and let \(\bbW_\ell^+\) be the minimizer of \(\gamma_\ell\) at iteration \(\ell\) from Problem~\ref{pr:tr}, which has a closed form solution as described in Theorem~\ref{thm:What_sol}.  Let \(\gamma_\ell^+ := \gamma_\ell(\bbW^+)\).  Then, for \(\ell \geq 0\), 
    \[\gamma_{\ell+1}^+ \leq \gamma_\ell^+.\]
    Additionally, for \(\ell = 0\), \(\gamma_0^+= = \scalemath{0.6}{\norm{\brac{\begin{array}{c|c}A & B \\ \hline C & 0\end{array}}}}_{H_2}^2\), and when \(r\), the number of poles of \(R\), is equal to or larger than the size of a minimal realization of \(G\), \(\gamma_\ell^+ = 0\). 
\end{theorem}
\begin{proof}
    When \(\ell=0\), \(\sC\) is an empty \(0\times n\) matrix, and \(\gamma_0^+\) simplifies to \(\gamma_0^+ = \tr(C\Theta C^*)\), which by definition is the square of the \(H_2\) norm of the system \(\scalemath{0.6}{\brac{\begin{array}{c|c}A & B \\ \hline C & 0\end{array}}}\).  

    From Theorem~\ref{thm:What_sol}, for \(r \leq n_{\min}\), the unique solution to the minimization problem is
    \[\bbW_\ell^+ = C\Theta\sC^*(\sC \Theta \sC^*)\inv,\]
    and for \(r > n_{\min}\), a minimizer is 
    \[\bbW_\ell^+ = \begin{bsmallmatrix} C\Theta\widehat{\sC}^*(\widehat{\sC} \Theta \widehat{\sC}^*)\inv & 0\end{bsmallmatrix}.\]
    
    Due to Lemma~\ref{lem:min_real}, we make the substitutions \(C\Theta C^* = C_{\min}\Theta_{\min}C_{\min}^*\), \(C\Theta \sC^* = C_{\min}\Theta_{\min}\sC[\min]^*\), and \(\sC\Theta \sC^* = \sC[\min]\Theta_{\min}\sC[\min]^*\) into \(\bbW_\ell^+\) and \(\gamma_\ell^+\).  We note that \(\widehat{\sC[\min]}\in\comp^{n_{\min}\times n_{\min}}\) and \(\theta_{\min}\) are invertible, thus the simplification of \(\bbW_\ell^+\) simplifies to \(\begin{bsmallmatrix}C \widehat{\sC[\min]}\inv & 0\end{bsmallmatrix}\).  Substituting this into \(\gamma_\ell\) and expanding gives \(\gamma_\ell^+ = 0\).

    Now let \(\mathscr{C}_{\ell+1} = \begin{bsmallmatrix}\sC \\ *\end{bsmallmatrix}\) and consider the solution that minimizes \(\gamma_{\ell+1}(\cdot)\), which is \(\bbW_{\ell+1}^+\).  When evaluating the suboptimal quantity \(\gamma_{\ell+1}(\begin{bsmallmatrix}\bbW_\ell^+ & 0\end{bsmallmatrix})\), we get 
    \[\gamma_{\ell+1}(\begin{bsmallmatrix}\bbW_\ell^+ & 0\end{bsmallmatrix}) = \tr\p{\begin{bsmallmatrix}I & \bbW_\ell^+\end{bsmallmatrix} \begin{bsmallmatrix}C\Theta C^* & -C\Theta\mathscr{C}_{\ell+1}^* \\ -\mathscr{C}_{\ell+1}\Theta C^* & \mathscr{C}_{\ell+1}\Theta\mathscr{C}_{\ell+1}^*\end{bsmallmatrix} \begin{bsmallmatrix}I \\ \bbW_\ell^{+*}\end{bsmallmatrix}}\]
    which exactly equals \(\gamma_\ell^+\).  Because this is a suboptimal solution, we know \(\gamma_{\ell+1}^+\leq \gamma_{\ell+1}(\begin{bsmallmatrix}\bbW_\ell^+ & 0\end{bsmallmatrix})\), thus \(\gamma_{\ell+1}^+ \leq \gamma_\ell^+\).
\end{proof}

\begin{corollary}
    When \(r = n_{\min}\), \(\norm{R-G}_{H_2} = 0\), thus \(R=G\).
\end{corollary}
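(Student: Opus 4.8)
The plan is to combine the terminal case of Theorem~\ref{thm:norm_behavior} with the elementary factorization \(N - MG = M(R-G)\), which turns the vanishing of the weighted error norm into the vanishing of the true error. First I would invoke Theorem~\ref{thm:norm_behavior} at \(r = o\): the optimal weight \(\widetilde{\bbW}_o\) satisfies \(\gamma_o(\widetilde{\bbW}_o) = 0\). By equation~\ref{eq:H2_min} and the derivation preceding it, \(\gamma_o(\widetilde{\bbW}_o)\) is exactly \(\norm{N - MG}_{H_2}^2\), so this immediately gives \(\norm{N - MG}_{H_2} = 0\).

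Next I would exploit that \(R = M\inv N\) gives \(MR = N\), hence \(N - MG = MR - MG = M(R - G)\). By Theorem~\ref{thm:H_real} the system \(N - MG = \bbW H\) has a realization with zero feedthrough whose \(A\)-matrix coincides with that of the stable system \(G\); therefore \(N - MG\) is a stable, strictly proper rational matrix function. For such a function the \(H_2\) norm vanishes if and only if the function is identically zero, so \(N - MG \equiv 0\), i.e. \(N(s) = M(s)G(s)\) for every \(s\).

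Since \(W_0\) is square and invertible, \(M\inv(s)\) exists at all but finitely many \(s\), and there \(R(s) = M\inv(s) N(s) = M\inv(s) M(s) G(s) = G(s)\). Because two rational functions agreeing off a finite set agree everywhere, \(R = G\) as transfer functions, and in particular \(\norm{R - G}_{H_2} = 0\).

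The only genuinely non-bookkeeping step is the implication \(\norm{M(R-G)}_{H_2} = 0 \Rightarrow R = G\). The clean justification is the rational-function argument above: stability and strict properness of \(N-MG\), guaranteed by Theorem~\ref{thm:H_real}, make a zero \(H_2\) norm equivalent to being the zero function, after which invertibility of \(M\) removes the \(M\) factor. I would take care to note that the finite set where \(M\) is singular causes no difficulty, since equality of rational functions off a finite set is equality everywhere; this is the point where one must not naively cancel \(M\) without invoking analyticity.
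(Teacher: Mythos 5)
Your proposal is correct, and it reaches the conclusion by a genuinely different final step than the paper. Both arguments begin identically: invoke Theorem~\ref{thm:norm_behavior} at \(r=o\) to get \(\gamma_o(\widetilde{\bbW}_o)=0\), and identify this quantity with \(\norm{N-MG}_{H_2}^2=\norm{M(R-G)}_{H_2}^2\) via the derivation from equation~\ref{eq:H2_min} to equation~\ref{eq:min1}. Where you diverge is in converting the vanishing of the \emph{weighted} error into the vanishing of the true error. The paper applies the mixed-norm inequality \(\norm{R-G}_{H_2}\leq\sqrt{p}\,\norm{M\inv}_{H_\infty}\norm{M(R-G)}_{H_2}\) and concludes \(\norm{R-G}_{H_2}=0\) directly; this implicitly requires \(M\inv\in H_\infty\), i.e.\ that \(R\) is stable (the \(A\)-matrix of \(M\inv\) is that of \(R\)), which is a standing assumption in that part of the paper but is still an extra hypothesis being consumed. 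You instead argue that \(N-MG\) is a stable, strictly proper rational matrix (by Theorem~\ref{thm:H_real}), so zero \(H_2\) norm forces \(N-MG\equiv 0\) by the identity theorem, after which invertibility of \(M(s)\) off a finite set lets you cancel \(M\) and conclude \(R=G\) as rational functions. Your route buys a slightly weaker hypothesis --- it needs only that \(\det M\) is not identically zero (guaranteed by \(W_0\) invertible), not that \(M\inv\) is bounded on the imaginary axis --- and it makes explicit the analyticity argument that the paper's one-line cancellation glosses over. The paper's route is shorter and reuses a standard inequality. One minor caution for your version: the paper's \(\gamma\) is built from \(\int_0^\infty\) rather than the full line, so you should note that vanishing of \(H(\jw)\) for \(\omega\geq 0\) still forces \(H\equiv 0\) because a rational function vanishing on a set with an accumulation point is identically zero; this is a one-line addendum and does not affect correctness.
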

\begin{proof}
    Previously we showed \(\norm{M(R-G)}_{H_2}^2 = \gamma_{\ell,\min}\).  Thus when \(r=n_{\min}\), \(\norm{M(R-G)}_{H_2} = 0\).  Because
    \[\norm{R-G}_{H_2} \leq \sqrt{p} \norm{M\inv}_{H_\infty} \norm{M(R-G)}_{H_2},\]
    it follows that \(\norm{R-G} \leq 0\), implying \(R-G = 0\), or \(R=G\) (i.e. there exists a similarity transform between their minimal realizations).  
\end{proof}

To summarize, we see that choosing the weight matrix \(\bbW\) via an optimization on the weighted norm \(\norm{N-MG}_{H_2}\) admits an explicit solution that has a simple form and is computationally efficient to compute.  We also show that the weighted norm is monotonically decreasing as the number of iterations increase, and once \(R\) is as large as a minimal realization of \(G\), it is equal to \(G\) up to coordinate changes.  

\section{Interpolation points selection}								\label{inter_points.sec}
The previous two sections laid out the theoretical results for the work we present in this paper.  In this section, we explore various strategies of selecting the interpolating frequencies as well as the approximation rank at said frequencies.  In the first subsection, we first introduce a strategy inspired by the AAA algorithm and interpolate at the frequency where the error is largest.  Following that, we consider a random strategy and a grid strategy, both of which are computationally simpler.  We also lay out some important considerations in terms of choosing the rank of the approximation for each candidate frequency we select.  We end the section by summarizing the algorithm as a whole and provide pseudocode detailing the implementation.  

\subsection{Frequency selection strategies}
In each iteration of the AAA algorithm, the next interpolation point is chosen to the point which has the largest approximation error from the large set of test points~\cite{Nakatsukasa_2018}.  Intuitively, interpolating where the error is maximized should improve the approximation at that point and nearby points and lower the total error more than any other choice.  Thus, interpolating at the frequency where the spectral norm of the error system \(R-G\) is maximized is a reasonable choice for our algorithm.  In other words, at the \(\ell\)th iteration, we select \(\omega_\ell\) to be 
\begin{equation}
    \omega_\ell = \arg\max_{\omega\in\real} \norm{R(\jw)-G(j\omega)}_2.
\end{equation}
This frequency occurs where the peak gain is attained for the system \(R-G\), thus a \(H_\infty\) bisection search may be used to calculate where the next interpolation point will be.  These methods are already widely available, such as in MATLAB, where one can use the \texttt{getPeakGain} function, which uses a variant of the algorithm in~\cite{Bruinsma90}.  This algorithm converges in a small number of steps, but at each of those steps, all eigenvalues of a Hamiltonian matrix with size \(2(r+n)\) must be computed, which can be quite slow for large systems computationally speaking; when \(r \ll n\), this step would have a computational complexity of \(O(n^3)\).

\begin{algorithm}
    \caption{Maximum error frequency selection}
    \label{alg:max_error}
    \begin{algorithmic}[1]
        \Require{\(G\), \(R\)}
        \State \Return frequency returned from \(\mathtt{getPeakGain}(G-R)\)
    \end{algorithmic}
\end{algorithm}

\subsubsection{Random and discrete strategies}
Unfortunately, this may be unacceptably slow for large systems, thus we consider two alternative approaches that sidestep this computational bottleneck.  The first strategy is a discrete approach where we are given a set of frequencies \(\curly{\hat{\omega}_k}_{k=1}^K\) and choose the interpolation point from this set where the error is highest, i.e. the maximum value of \(\norm{R(j\hat{\omega}_k) - G(j\hat{\omega}_k)}_2\) among all \(\hat{\omega}_k\).  Assuming a zero-pole-gain input model is available and \(r<<n\), this operation may be done in \(O(Kn)\).  The performance of this method may depend on how fine the grid is, so \(K\) may have to be very large to get good performance.  

\begin{algorithm}
    \caption{Discrete frequency selection}
    \label{alg:discrete}
    \begin{algorithmic}[1]
        \Require{\(G\), \(R\), \(\curly{\hat{\omega}_k}_{k=1}^K\)}
        \State \Return the \(\hat{\omega}_k\) that maximizes \(\norm{R(j\hat{\omega}_k) - G(j\hat{\omega}_k)}_2\)
    \end{algorithmic}
\end{algorithm}

We will consider an additional strategy of choosing the frequency stochastically.  First, we pick \(K\) random frequencies \(\curly{\hat{\omega}_k}_{k=1}^K\) logarithmically between \(\omega_{\min}\) and \(\omega_{\max}\).  Then, we calculate the error norm for each frequency and let \(\omega_\ell\) be the frequency that maximizes the norm.  Like the previous method, this operation may be done in \(O(Kn)\).

\begin{algorithm}
    \caption{Random frequency selection}
    \label{alg:random}
    \begin{algorithmic}[1]
        \Require{\(\omega_{\min}\), \(\omega_{\max}\), \(R\), \(G\), \(K\)}
        \State \(\curly{\hat{\omega}_k}_{k=1}^K \gets\) \(K\) log-randomly distributed numbers in \([\omega_{\min}, \, \omega_{\max}]\)
        \State \Return the \(\hat{\omega}_k\) that maximizes \(\norm{R(j\hat{\omega}_k) - G(j\hat{\omega}_k)}_2\)
    \end{algorithmic}
\end{algorithm}

\subsection{Rank considerations}
The final consideration is the method of choosing the rank of interpolation at each frequency.  In most cases, interpolating the singular vector corresponding to the largest singular value is acceptable.  However, if the next largest singular values at that frequency are very close in magnitude, interpolating all of them may be worth consideration.  We can then define a ``cut-off'' value \(0<\rho\leq 1\) and include all singular vectors that correspond to singular values larger than \(\rho\) times the largest singular value.  

In the case that the candidate frequency is close to an already-interpolated frequency, we may instead increase the rank of the existing frequency by incorporating the next largest singular vector.  After calculating a candidate frequency \(\omega_\ell\), consider the closest existing interpolation frequency \(\overline{\omega}\).  If the relative difference \(|(\omega_\ell-\overline{\omega})/\omega_\ell|\) is less than some threshold value \(\mu\), then we may increase the approximation rank of \(\overline{\omega}\) instead of adding a new interpolation point.  The implementation of these two ideas are shown in Algorithm~\ref{alg:refine}.

\begin{algorithm}
    \caption{Frequency and rank refinement}
    \label{alg:refine}
    \begin{algorithmic}[1]
        \Require{\(G\), \(m\), \(\curly{\omega_k}_{k=1}^m\), \(\curly{r_k}_{k=1}^m\), \(\omega\), \(\mu\), \(\rho\)}
        \State \(r_{\min}, \, \mathrm{new} \gets 1, \, \mathrm{true}\)
        \If{\(m > 0\)}
            \State \(i \gets\) index of nearest freq. in \(\curly{\omega_k}_{k=1}^m\) to \(\omega\)
            \If{\(\omega_i = \omega = 0\) or \(\left|(\omega_i - \omega)/\omega\right| < \mu\)}
                \State \(\omega, \, r_{\min}, \, \mathrm{new} \gets \omega_i, \, r_i+1, \, \mathrm{false}\)
            \EndIf
        \EndIf
        \State \(\sigma \gets\) \(r_{\min}\)th largest singular value of \(G(j\omega)\)
        \State \(r_{\max}\gets\) index of smallest S.V. \(\geq \rho \sigma\)
        \If{new}
            \State \(m\gets m+1\)
            \State \(\omega_m, \; r_m \gets \omega, r_{\max}\)
        \Else
            \State \(r_i \gets r_{\max}\)
        \EndIf
        \State \Return \(\omega\), \(r_{\min}\), \(r_{\max}\)
    \end{algorithmic}
\end{algorithm}

\subsection{Summary}
Algorithm~\ref{alg:method} provides an overview of the algorithm.  First, various variables are initialized and the controllability Gramian of the input system is calculated.  Then, one of the previous frequency selection methods is used to find a candidate frequency, which is then used in the rank refinement step.  A SVD of the chosen frequency's response is performed, and the resulting data is used in the solution of two linear equations, and finally the reduced order system is constructed.  

\begin{algorithm}
    \caption{The low-rank tangential algorithm}
    \label{alg:method}
    \begin{algorithmic}[1]
        \Require{State space system \(G\), \(\mu\), \(\rho\)}
        \Require{\(K\), \(\omega_{\min}\), \(\omega_{\max}\), \(\curly{\hat{\omega}_k}_{k=1}^K\) depending on algorithm}
        \State \(R \gets D\), \(k \gets 1\)
        \State Initialize empty \(\mathcal{A}, \, \mathcal{B}_m, \, \mathcal{B}_N, \, \sC\)
        \State \(\Theta \gets\) the controllability Gramian of \(G\)
        \Repeat
            \State Use one of Algorithms~\ref{alg:max_error}-\ref{alg:random} to get \(\omega\)
            \State Use Algorithm~\ref{alg:refine} to compute \(\omega, \, r_{\min}, \, r_{\max}\)
            \State \(U_k, \, \Sigma_k, \, V_k \gets\) the \(r_{\min}\)th to \(r_{\max}\)th singular values/vectors from the SVD of \(G(j\omega)\)
            \State Solve \(C_k (j\omega I - A) = U_k^*C\) for \(C_k\)
            \If{\(G\) complex or \(\omega = 0\)}
                \State Diagonally concatenate \(\begin{bsmallmatrix}j\omega I\end{bsmallmatrix}\) to \(\mathcal{A}\)
                \State Vertically concatenate \(\begin{bsmallmatrix}U_k^*\end{bsmallmatrix}\) to \(\mathcal{B}_m\)
                \State Vertically concatenate \(\begin{bsmallmatrix}\Sigma_k V_k^*\end{bsmallmatrix}\) to \(\mathcal{B}_n\)
                \State Vertically concatenate \(\begin{bsmallmatrix}C_k\end{bsmallmatrix}\) to \(\sC\)
            \Else
                \State Diagonally concatenate \(\begin{bsmallmatrix}0 & \omega I \\ -\omega I & 0\end{bsmallmatrix}\) to \(\mathcal{A}\)
                \State Vertically concatenate \(\begin{bsmallmatrix}\Re(U_k)\trans \\ \Im(U_k)\trans\end{bsmallmatrix}\) to \(\mathcal{B}_m\)
                \State Vertically concatenate \(\begin{bsmallmatrix}\Sigma_k\Re(V_k)\trans \\ \Sigma_k\Im(V_k)\trans\end{bsmallmatrix}\) to \(\mathcal{B}_n\)
                \State Vertically concatenate \(\begin{bsmallmatrix}\Re(C_k) \\ \Im(C_k)\end{bsmallmatrix}\) to \(\sC\)
            \EndIf
            \State Solve \(\bbW \sC \Theta \sC^* = C \Theta \sC^*\) for \(\bbW\)
            \State \(R\gets \brac{\begin{array}{c|c}\mathcal{A}-\mathcal{B}_m\bbW & \mathcal{B}_n - \mathcal{B}_m D \\ \hline \bbW & D \end{array}}\)
        \Until{\(R\) sufficiently approximates \(G\)}
    \end{algorithmic}
\end{algorithm}

\section{Computational results}									\label{computations.sec}
We will now demonstrate the performance of our proposed algorithms with numerical examples.  In each of the following figures, we use the so-called ``ISS'' model as the input system, which is a 3-input, 3-output, 270-state flexural model of one of the modules found on the International Space Station~\cite{iss_model}.  The code that generated these figures may be found at the GitHub repository \url{https://github.com/bluesun212/ITFramework}.

Additionally, we use the square root of the trace of the covariance matrix of the error system (or an approximation thereof) as an indicator of the approximation performance, which we denote ``error norm'' in the figures.  This measures the error across the entire frequency spectrum and is identical to the \(H_2\) norm of the error system for stable systems.  

Fig.~\ref{fig:alg_comp} shows the error norm plotted against the reduced systems' state dimension for a number of algorithms, namely balanced truncation, the Loewner framework, tangential-AAA, and our proposed method using the maximum error frequency selection approach.  Up until about 20 states, each of the methods perform similarly.  After that, our proposed method sees continued improvement as the the number of states increases, and remains close to balanced truncation's.  

\begin{figure}[ht!]
    \centering
    \includegraphics[width=0.9\linewidth]{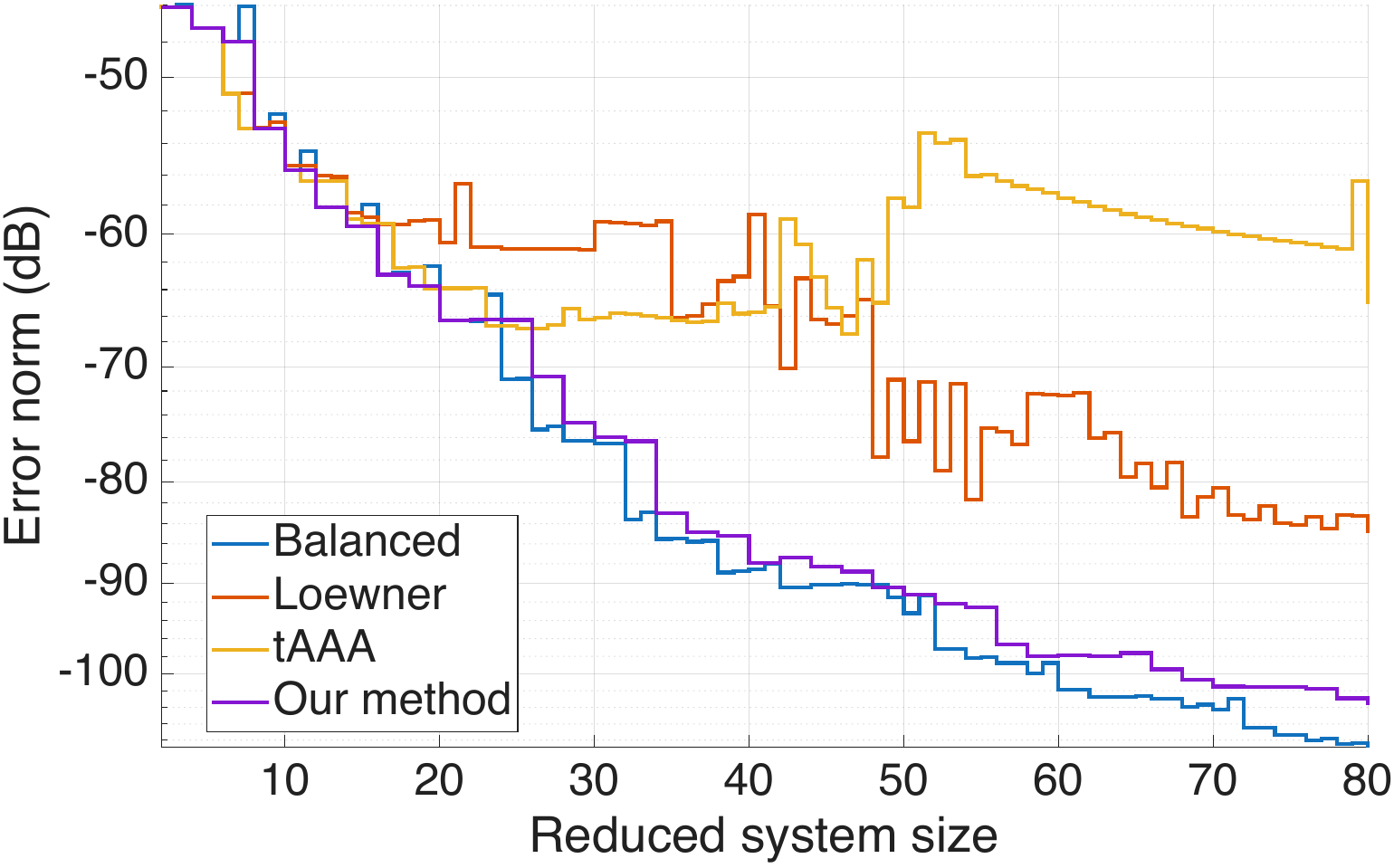}
    \mycaption{A comparison of the performance of balanced reduction in blue, the generalized Loewner framework~\cite{antoulas2017tutorial} in red, the tangential-AAA (tAAA) algorithm~\cite{Benner23} in yellow, and one of our proposed methods (max. error algorithm) in purple.  The tAAA algorithm and the Loewner framework both were provided an input set of 1000 points logarithmically spaced between \(\pm[10^{-1}, \, 10^2]j\).  The graph shows the norm of the error system versus the reduced system's number of poles.  The input model is the 3-input, 3-output, 270-state ``ISS'' model~\cite{iss_model}.} 
    \label{fig:alg_comp}
\end{figure}

We now compare the performance of each of our frequency selection methods, starting with the gridded strategy as shown in Fig.~\ref{fig:gridded}.   In each iteration of the gridded approach, an interpolation point is added at the frequency of maximum error among a grid of \(K=50\), 200, and 500 logarithmically spaced points between \([10^{-1}, \, 10^2]\).  We see that performance is very good with \(K=200\) and nearly identical to that of the maximum error approach with \(K=1000\).  A grid with many points is required in order to accurately resolve fine details. 

\begin{figure}[ht!]
    \centering
    \includegraphics[width=0.9\linewidth]{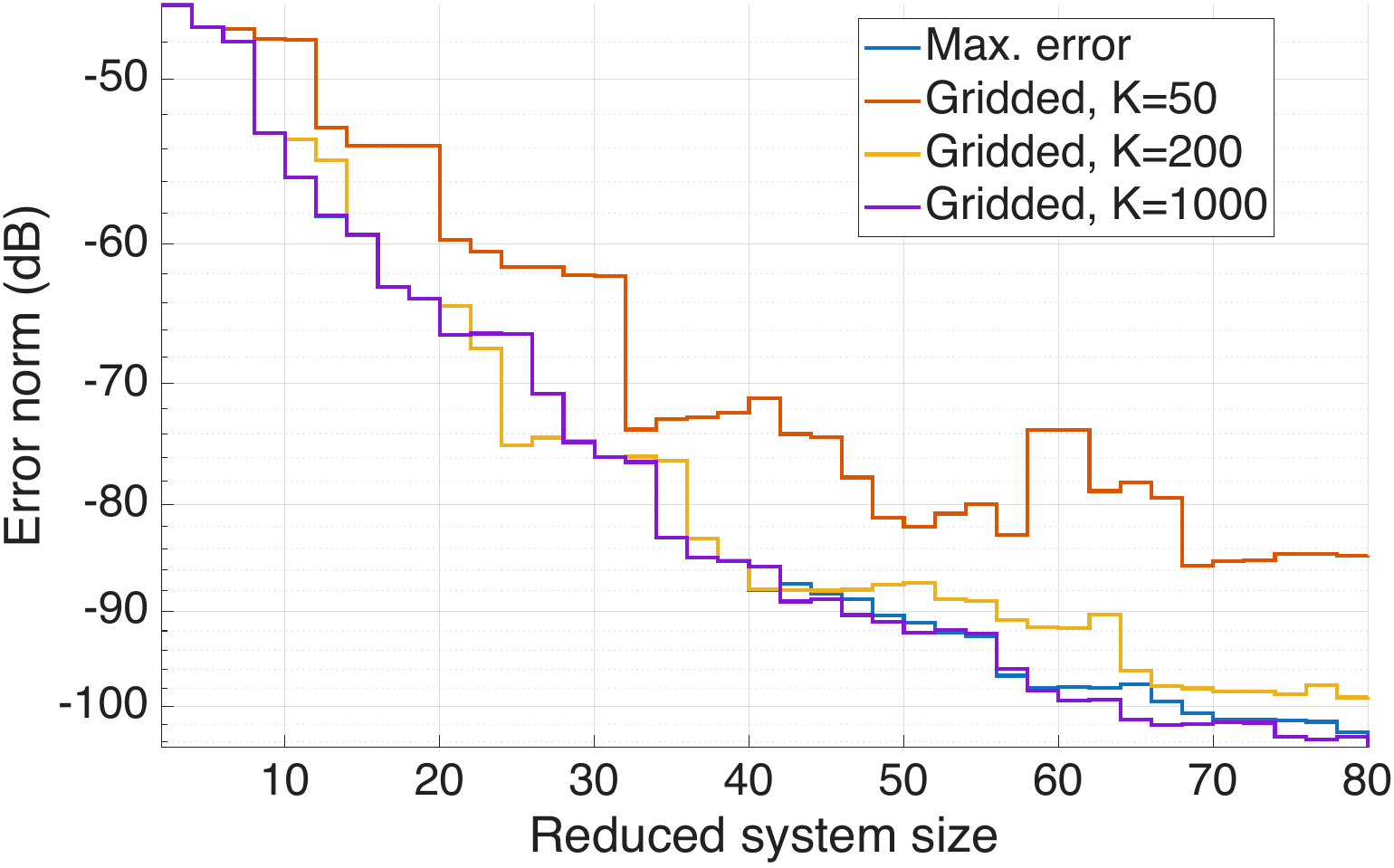}
    \mycaption{A comparison of the performance of the maximum error strategy (in blue) and the gridded frequency point selection (in varying colors) with varying \(K\), the number of points in the grid.}
    \label{fig:gridded}
\end{figure}

A comparison of random strategy, which selects the frequency of highest error from \(K\) logarithmically sampled points between \([10^{-1}, \, 10^2]\), is shown in Fig.~\ref{fig:random}.  For \(K=5\), 20, and 100, we reduced the ISS system and recorded the error norm for 500 different random starting seeds.  The mean behavior of these trials and the middle 50\% performance is shown in Fig.~\ref{fig:random} in comparison to the maxium frequency approach.  We note as \(K\) increases, the average performance increases and the variance among runs decreases.  When \(K=20\), the mean behavior is markedly better than that of \(K=5\), and when \(K=100\), the mean performance practically identical to the maximum error approach.  

\begin{figure}[ht!]
    \centering
    \includegraphics[width=0.9\linewidth]{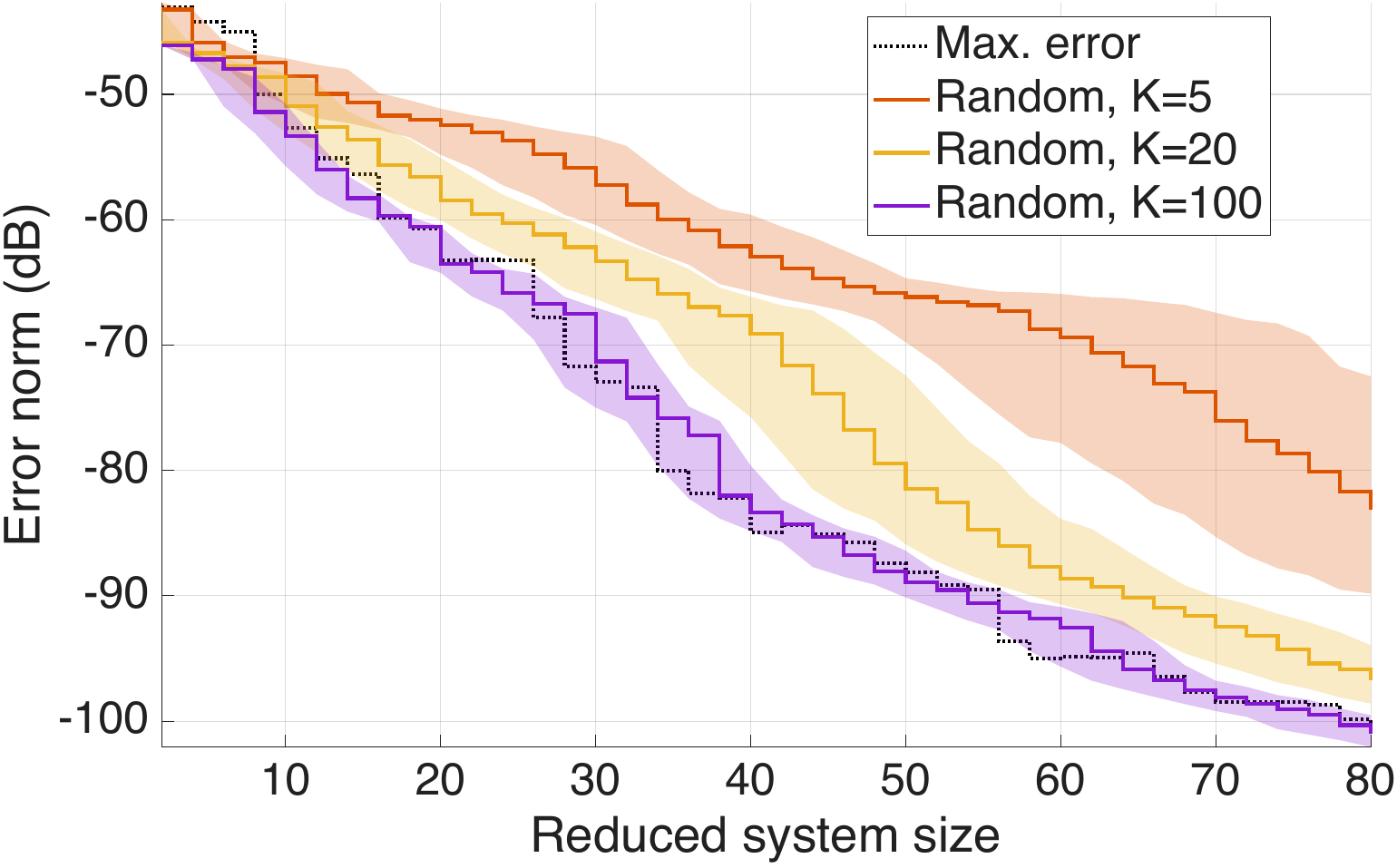}
    \mycaption{A comparison of the performance of the maximum error strategy (in black) and random frequency point selection strategy (in varying colors) with varying \(K\), the number of points tested per iteration, versus reduced system size.   The solid lines represent the mean behavior of a number of random runs, with the shaded regions representing the middle 50\% behavior.}
    \label{fig:random}
\end{figure}

We note that the mean random approach requires an order of magnitude fewer points than the gridded approach, but the performance is non-deterministic, which is an important trade-off the end-user will have to consider when selecting the most appropriate algorithm.  Nevertheless, these two figures indicate that good performance can be attained with less computational power required compared to the maxiumum error approach.  To bolster this point, we compare the qualitative performance of each algorithm in Fig.~\ref{fig:sigma}.  The diamonds on each line show the interpolated frequencies for each of the proposed algorithms after three iterations.  The maximum error approach evidently interpolates at the frequency with maximum error, which tends to be located at a peak in the frequency response.  The gridded approach generally interpolates at the closest grid point to these peaks, and the random approach behaves non-deterministically.  Despite not interpolating at the position of maximum error, the reduced order models the latter two algorithms produce still manage to replicate the response of the system at the peaks well, which is due to the weight optimization ensuring a proxy for the error across the entire frequency domain is minimized.  

\begin{figure}[ht!]
    \centering
    \includegraphics[width=0.9\linewidth]{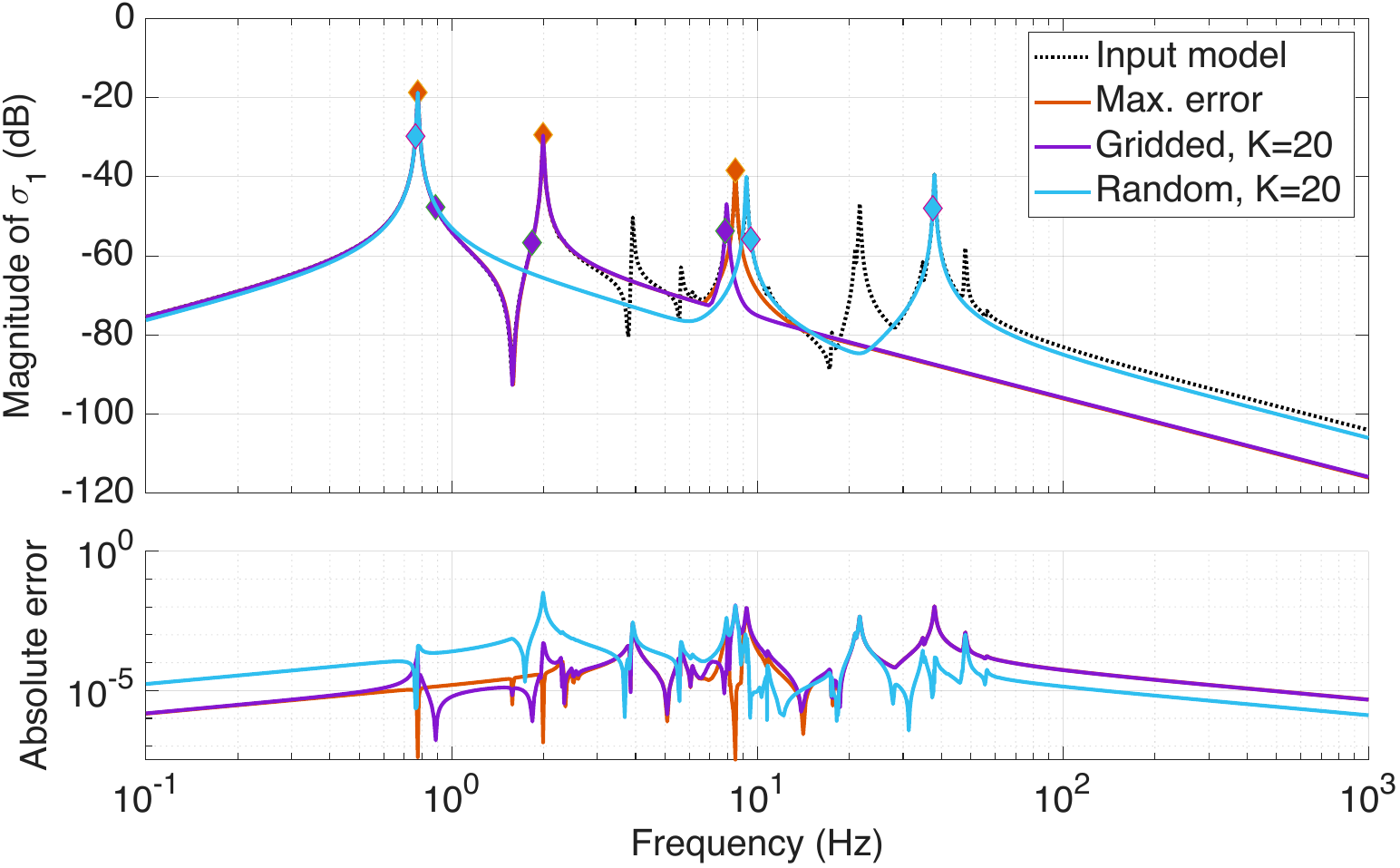}
    \mycaption{A plot of the maximum singular value (\(\sigma_1\)) of the unreduced input model and each of our proposed algorithms reduced to 6 states on the upper sub-figure, along with the absolute error in the bottom sub-figure.  In the top figure, the dotted black line indicates \(\sigma_1\) of the input system, which is the aforementioned ``ISS'' model.  In both plots, the maximum error approach is in red, the gridded approach is in purple, and the random approach is in blue.  The diamonds indicate where the matching algorithm placed an interpolation point.  The gridded approach uses a set of 20 frequencies logarithmically spaced between \([10^{-1}, \, 10^2]\).  The random approach selects the best frequency of 20 random frequencies between \([10^{-1}, \, 10^2]\).}
    \label{fig:sigma}
\end{figure}

One important aspect not demonstrated in the figures is the stability of the resulting system.  Moment matching methods are not guaranteed to produce a stable model even if the input model is stable; in fact, the models produced by the Loewner framework and the tangential-AAA algorithm are generally unstable.  In comparison, the models produced from our algorithms tend to have good stability characteristics, including for systems beside the ISS system.  

\section{Discussion and conclusion}
In this work we have introduced an algorithm based on the AAA algorithm and the Loewner framework for the model reduction of MIMO systems.  We derived a state-space representation of the tangential interpolant which is constructed from frequency response data and a free parameter matrix.  This free parameter matrix is the solution of an optimation problem and has a closed form solution which can be found by solving a linear equation.  Importantly, this choice of free parameters ensures that a weighted \(H_2\) norm of the error is decreasing and reaches 0 when the resulting system's state dimension equals the number of observable and controllable modes of the original system.  We discussed the method of interpolation frequency selection and introduced three algorithms, each of which trade off approximation performance with computational complexity.  The numerical examples demonstrated good performance on par with standard model reduction methods and improved performance compared to other moment matching methods including good stability characteristics.  

In future works we plan to adapt this algorithm for use on sparse systems and explore more sophisticated frequency selection methods.  We also plan to prove more theoretical properties including stability characteristics, the existence of monotonically decreasing upper bounds for the \(H_2\) and \(H_\infty\) norms of the error system, and expressions describing the improvement across iterations. 

\appendices

\section{}
\textit{Assumption~\ref{amp:int_params}}: \ampintparamsstatement

\textit{Theorem~\ref{thm:tan_int}}: \thmtanintstatement

\begin{proof}
    Consider the interpolant from equation~\eqref{eq:rtan_int}
    \begin{align*}
            R(s) := &M\inv(s)N(s) := \p{\weights \mathcal{M}(s)}\inv \weights\mathcal{N}(s) \\
            :=&\scalemath{0.85}{\p{I + \sum_{k=1}^\ell \frac{W_k U_k^*}{s-\jw_k}}\inv \p{D + \sum_{k=1}^\ell \frac{W_k \Sigma_k V_k^*}{s-\jw_k}}}.
    \end{align*}
    Let \(G(\jw_i) = U_i\Sigma_i V_i^* + U_i^\perp\), where \(U_i^* U_i^\perp = 0\).  Then, \(U_i^* G(\jw_i) = \Sigma_i V_i^*\).  \(M\inv\) exists because it is a proper square system with in invertible feedthrough term, thus \(M(s)R(s) = N(s)\) at all but a finite number of points.  Expanding \(M\) and \(N\) and multiplying by \(s-\jw_i\) yields the relation
    \begin{align*}
        &\p{(s-\jw_i)I + W_i U_i^* + \sum_{\mathclap{k=1\neq i}}^\ell \frac{(s-\jw_i)W_k U_k^*}{s-\jw_k}} R(s) =  \\
        &\p{(s-\jw_i)D + W_i \Sigma_i V_i^* + \sum_{\mathclap{k=1\neq i}}^\ell \frac{(s-\jw_i)W_k \Sigma_k V_k^*}{s-\jw_k}}.
    \end{align*}
    Letting \(s = \jw_i\), all terms containing \(s-\jw_i\) vanish, giving
    \[W_i U_i^* R(\jw_i) = W_i \Sigma_i V_i^* \implies W_i (U_i^* R(\jw_i) - \Sigma_i V_i^*) = 0.\]
    Because \(W_i\) is full column rank, it has a trivial null-space, thus \(U_i^* R(\jw_i) - \Sigma_i V_i^* = 0\), and \(U_i^* R(\jw_i) = U_i^* G(\jw_i)\).  Taking the limit as \(|s|\to\infty\) of equation~\eqref{eq:rtan_int}, all terms in the sum approach zero, thus \(\lim_{|s|\to\infty} R(s) = D\).
\end{proof}

\textit{Theorem~\ref{thm:H_real}}: \thmHrealstatement
\begin{proof}
    Consider the system \(H\) in block column form
    \[H = \begin{bsmallmatrix}H_0 \\ H_1 \\ : \\ H_\ell\end{bsmallmatrix} = \begin{bsmallmatrix}D - G \\ \mathcal{N}_1  - \mathcal{M}_1 G\\ : \\ \mathcal{N}_\ell - \mathcal{M}_\ell G \end{bsmallmatrix}.\]
    The first block system \(H_0\) has the realization
    \begin{equation}
        H_0 = \brac{\begin{array}{c|c}A & B \\ \hline -C & 0\end{array}}.
    \end{equation}
    
    Now suppose that \(H_k\) is associated with the interpolation point at \(\omega_k = 0\), or any frequency in the case that \(G\) has complex coefficients.  The dynamical system describing \(H_k\) can be written as 
    \begin{align*}
        \dot{x} &= Ax + Bu \\
        \dot{z}_1 &= \jw_k z_1 + \Sigma_k V_k^* u \\
        \dot{z}_2 &= U_k^* C x + \jw_k z_2 + U_k^* D u \\
        y &= z_1 - z_2.
    \end{align*}
    Now we let \(z = z_1 - z_2\), which yields 
    \begin{align*}
        \dot{x} &= Ax + Bu \\
        \dot{z} &= -U_k^* C x + \jw_k z + (\Sigma_k V_k^* - U_k^* D) u \\
        y &= z.
    \end{align*}
    Note that the mode \(\hat{z} = z_1 + z_2\) is unobservable so it can be discarded.  We can utilize the fact that \(U_k^* U_k = I\) and \(U_k \Sigma_k V_k^* = C(\jw_k I - A)\inv B + D  - U_k^\perp\) for some \(U_k^\perp\) with \(U_k^* U_k^\perp = 0\) to simplify further:
    \begin{align*}
        \dot{z} &= \jw_k I z - U_k^* \p{Cx - (U_k \Sigma_k V_k^* - D)u} \\ 
        &= \jw_k z - U_k^*\p{Cx - C(\jw_k I - A)\inv B u} + \cancel{U_k^* U_k^\perp u} \\
        &= \jw_k z - U_k^* C (\jw_k I - A)\inv ((\jw_k I - A)x - Bu) \\
        &= \jw_k z + \underbrace{U_k^* C (\jw_k I - A)\inv}_{C_k} (Ax + Bu - \jw_k x). \\[-2.25em]
    \end{align*}
    Finally, we perform a coordinate change \(w = z - C_k x\) to get 
    \begin{align*}
        \dot{x} &= Ax + Bu \\ 
        \dot{w} &= \jw_k z + C_k (\cancel{Ax + Bu} - \jw_k x) - \cancel{C_k(Ax + Bu)} \\
        &= \jw_k (z - C_k x) = \jw_k w \\
        y &= C_k x - w.
    \end{align*}
    The \(w\) state is uncontrollable, so it can be removed, yielding 
    \begin{equation} \label{eq:Hi1}
        H_k = \brac{\begin{array}{c|c}A & B \\ \hline C_k & 0\end{array}}, \; C_k = U_k^* C(\jw_k I - A)\inv
    \end{equation}
    as a state space realization for \(H_k\).  

    \bigskip
    Now consider the case that \(G\) has real coefficients and \(\omega_k \neq 0\).  The structure for \(\mathcal{M}_k\) and \(\mathcal{N}_k\) is slightly different in this case.  Let 
    \[\begin{bsmallmatrix}\mathcal{M}_k^c & \mathcal{N}_k^c\end{bsmallmatrix} = \brac{\begin{array}{c|cc}\jw_k I & U_k^* & \Sigma_k V_k^* \\ \hline I & 0 & 0\end{array}}.\]
    From Remark~\ref{rem:mnk_eq},
    \[\begin{bsmallmatrix}\mathcal{M}_k & \mathcal{N}_k\end{bsmallmatrix} = \begin{bmatrix}\Re(\mathcal{M}_k^c) & \Re(\mathcal{N}_k^c) \\ -\Im(\mathcal{M}_k^c) & -\Im(\mathcal{N}_k^c)\end{bmatrix},\]
    thus by properties of \(\Re(\cdot)\) and \(\Im(\cdot)\) on systems, 
    \begin{align*}
        H_k &= \begin{bmatrix}\Re(\mathcal{N}_k^c) \\ -\Im(\mathcal{N}_k^c)\end{bmatrix} - \begin{bmatrix}\Re(\mathcal{M}_k^c) \\ -\Im(\mathcal{M}_k^c)\end{bmatrix} G \\
        &= \begin{bmatrix}\Re(\mathcal{N}_k^c - \mathcal{M}_k^c G) \\ -\Im(\mathcal{N}_k^c - \mathcal{M}_k^c G)\end{bmatrix} = \begin{bmatrix}\Re(H_k^c) \\ -\Im(H_k^c)\end{bmatrix},
    \end{align*}
    where \(H_k^c\) is the \(H_k\) matrix that appears in equation~\eqref{eq:Hi1}.  Substituting in state space realizations for \(\Re(H_k^c)\) and \(\Im(H_k^c)\) yields
    \[H_k = \scalemath{0.8}{\brac{\begin{array}{cc|c}A & 0 & B \\ 0 & A & 0 \\ \hline \Re(C_k^c) & -\Im(C_k^c) & 0 \\ -\Im(C_k^c) & -\Re(C_k^c) & 0 \end{array}}},\]
    where \(C_k^c\) is the \(C_k\) that appears in equation~\eqref{eq:Hi1}.  The second block state is uncontrollable, so it can be discarded, yielding
    \begin{equation}
        H_k = \brac{\begin{array}{c|c} A & B \\ \hline C_k & 0\end{array}}, \; C_k = \scalemath{0.8}{\begin{bmatrix}\Re(U_k^* C(\jw_k I - A)\inv) \\ -\Im(U_k^* C(\jw_k I - A)\inv)\end{bmatrix}}.
    \end{equation}
    We see that in all cases, each \(H_k\) has the same ``\(A\)'' and ``\(B\)'' matrices, meaning a realization for \(H\) can be created by vertically stacking each of their ``\(C\)'' and ``\(D\)'' matrices, yielding the realization for \(H\) is what is stated in the theorem statement.  
\end{proof}

\begin{lemma} \label{lem:min_real}
    Let
    \[G = \brac{\begin{array}{c|c}A & B \\ \hline C & D\end{array}} \text{ and } G_{\min} = \brac{\begin{array}{c|c}A_{\min} & B_{\min} \\ \hline C_{\min} & D\end{array}},\]
    where \(G_{\min}\) is a minimal realization of \(G\).  Let \(\Theta\) and \(\Theta_{\min}\) be the controllability Gramians of \(G\) and \(G_{\min}\) respectively.  Also let
    \[\widehat{C} = \begin{bsmallmatrix}\widehat{C}_1 \\ : \\ \widehat{C}_m\end{bsmallmatrix}\; \text{and} \; \widehat{C}_{\min} = \begin{bsmallmatrix}\widehat{C}_{1,\min} \\ : \\ \widehat{C}_{m,\min}\end{bsmallmatrix},\]
    where \(\widehat{C}_k = U_k^* C(s_k I - A)\inv\) and \(\widehat{C}_{k,\min} = U_k^* C_{\min} (s_k I - A_{\min})\inv\).  Then \(\widehat{C} \Theta \widehat{C}^* = \widehat{C}_{\min} \Theta_{\min} \widehat{C}_{\min}^*\), \(C\Theta \widehat{C}^* = C_{\min}\Theta_{\min} \widehat{C}_{\min}^*\) and \(C\Theta C^* = C_{\min}\Theta_{\min}C_{\min}^*\).      
\end{lemma}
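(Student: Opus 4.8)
The plan is to show that all three claimed identities are instances of a single principle: the quantities $C\Theta C^*$, $C\Theta\widehat C^*$, and $\widehat C\Theta\widehat C^*$ depend on the realization $(A,B,C)$ only through the \emph{impulse response} $h(t) := Ce^{At}B$. Since $h$ is the inverse Laplace transform of the strictly proper part of the transfer function, it is identical for any two realizations of the same transfer function, so in particular $Ce^{At}B = C_{\min}e^{A_{\min}t}B_{\min}$. The matrices $U_k$ likewise come from the SVD of $G(s_k)$, which is realization-independent. Hence, once each of the three quantities is exhibited as a fixed functional of $h$ and the $U_k$, the equalities with the minimal-realization counterparts follow immediately.

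To realize each quantity as a functional of $h$, I would first use the integral form of the controllability Gramian, $\Theta = \int_0^\infty e^{At}BB^*e^{A^*t}\,dt$, which converges because the existence of $\Theta$ (assumed in the statement) forces the controllable modes of $A$ to be stable, and likewise for $\Theta_{\min}$ with $A_{\min}$ Hurwitz. I would then represent the resolvents in $\widehat C_k$ by the Laplace transform $(s_kI-A)\inv = \int_0^\infty e^{-s_k\tau}e^{A\tau}\,d\tau$, valid whenever $\Re(s_k)$ exceeds the spectral abscissa of $A$; in the intended application the $s_k=\jw_k$ lie on the imaginary axis and $A$ is Hurwitz, so this holds automatically and no $s_k$ is an eigenvalue. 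Defining $p_k(t) := C(s_kI-A)\inv e^{At}B = \int_0^\infty e^{-s_k\tau}h(t+\tau)\,d\tau$, which is manifestly a functional of $h$ alone, a direct substitution produces
\[ C\Theta C^* = \int_0^\infty h(t)h(t)^*\,dt, \qquad C\Theta\widehat C_j^* = \left(\int_0^\infty h(t)\,p_j(t)^*\,dt\right)U_j, \]
\[ \widehat C_k\Theta\widehat C_j^* = U_k^*\left(\int_0^\infty p_k(t)\,p_j(t)^*\,dt\right)U_j, \]
and reassembling the blocks recovers $C\Theta C^*$, $C\Theta\widehat C^*$, and $\widehat C\Theta\widehat C^*$ as functionals of $h$ and the $U$'s only. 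Invoking the invariance of $h$ and $U_k$, replacing $(A,B,C,\Theta)$ by $(A_{\min},B_{\min},C_{\min},\Theta_{\min})$ leaves every integral unchanged, which gives the three stated equalities.

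The main thing to be careful about is the analytic bookkeeping: confirming convergence of the Gramian and resolvent integrals from the standing stability/Gramian-existence hypothesis, and checking that Fubini applies when the resolvent and Gramian integrals are interchanged to form $p_k$. I expect this to be the only real obstacle; the algebra collapsing each product to an $h$-functional is routine once the representations are in place. An alternative, more algebraic route avoids the time-domain integrals altogether by observing that $Cf(A)B$ is realization-invariant for every $f$ analytic on a neighborhood of the spectrum, via the Cauchy integral $Cf(A)B = \frac{1}{2\pi i}\oint f(z)\,G_0(z)\,dz$ with $G_0(s)=C(sI-A)\inv B$ the invariant strictly-proper transfer function; this could replace the convergence arguments above if one prefers to sidestep them, with the conjugate-transpose factors handled automatically since they are just $h(t)^*$ and $p_k(t)^*$.
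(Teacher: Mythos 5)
Your proof is correct, but it takes a genuinely different route from the paper's. The paper works entirely in the frequency domain: it writes each product as an integral of $\theta(\jw)\theta^*(\jw)$ with $\theta(s)=(sI-A)\inv B$, applies the Kalman canonical decomposition to $(A,B,C)$, reads off the block structure of $(sI-A)\inv$, and concludes purely algebraically that $C\theta(s)=C_{\min}\theta_{\min}(s)$ and $\widehat{C}\theta(s)=\widehat{C}_{\min}\theta_{\min}(s)$, after which the equalities drop out of the integrals. You instead work in the time domain, exhibiting each of the three quantities as a fixed functional of the impulse response $h(t)=Ce^{At}B$ (and the realization--independent $U_k$), using the Laplace representation of the resolvent to show $p_k(t)=C(s_kI-A)\inv e^{At}B=\int_0^\infty e^{-s_k\tau}h(t+\tau)\,d\tau$. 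Both hinge on the same underlying fact --- realization invariance of the transfer function / impulse response --- but the paper's decomposition argument is purely algebraic and so needs no convergence or Fubini bookkeeping, and it works verbatim even when $A$ carries unstable uncontrollable or unobservable modes (where your resolvent integral $\int_0^\infty e^{-s_k\tau}e^{A\tau}\,d\tau$ diverges unless you first restrict to the controllable subspace, a step you gesture at but do not carry out). Your approach, in exchange, makes the invariance principle transparent and avoids writing down the Kalman block structure; your suggested Cauchy--integral variant ($Cf(A)B$ invariant for $f$ analytic near the spectra) is the cleanest way to patch the analytic caveats and is essentially equivalent to what the decomposition buys the paper. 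Note also that the paper's $\Theta$ is defined by a one--sided frequency integral rather than your time--domain Gramian; the identities hold under either convention by the same invariance argument, so this does not affect correctness, but you should state which Gramian you are using.
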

\begin{proof}
    First, note the integral forms 
    \begin{align*}
        \widehat{C}\Theta \widehat{C}^* &= \int_0^\infty \widehat{C} \theta(\jw)\theta^*(\jw)\widehat{C}^* \rmd\omega \\
        C\Theta \widehat{C} ^* &= \int_0^\infty C \theta(\jw)\theta^*(\jw)\widehat{C}^* \rmd\omega \\
        C\Theta C^* &= \int_0^\infty C \theta(\jw)\theta^*(\jw)C^* \rmd\omega
    \end{align*}
    where \(\theta(s) = (sI-A)\inv B\).  Consider the similarity transformation \(T\inv\) on \(G\) yielding the canonical Kalman decomposition~\cite{Hespanha09}, i.e. 
    \begin{align*}
        A &= T\inv \begin{bsmallmatrix}A_{\min} & 0 & A_{\times o} & 0 \\ A_{c\times} & A_{c\overline{o}} & A_{\times\times} & A_{\times \overline{o}} \\ 0 & 0 & A_{\overline{c}0} & 0 \\ 0 & 0 & A_{\overline{c}\times} & A_{\overline{co}}\end{bsmallmatrix} T, \; B = T\inv \begin{bsmallmatrix}B_{\min} \\ B_{c\overline{o}} \\ 0 \\ 0\end{bsmallmatrix}, \\
        C &= \begin{bsmallmatrix}C_{\min} & 0 & C_{\overline{c}o} & 0\end{bsmallmatrix} T.
    \end{align*}
    Note that the symbol \(\times\) denotes a quantity that is unimportant for the proof.  Applying this transformation on \((sI-A)\inv\) yields 
    \[(sI-A)\inv = T\inv \begin{bsmallmatrix}(sI-A_{\min})\inv  & 0 & \times & 0 \\ \times & \enspace \mathclap{(sI-A_{c\overline{o}})\inv} \enspace\quad & \times & \times \\ 0 & 0 &\enspace\quad\mathclap{(sI-A_{\overline{c}o})\inv} \enspace & 0 \\ 0 & 0 & \times & (sI-A_{\overline{co}})\inv\end{bsmallmatrix} T,\]
    thus
    \begin{align*}
        \widehat{C}_k &= \begin{bsmallmatrix}U_k^* C_{\min} (s_k I - A_{\min})\inv & 0 & \times & 0\end{bsmallmatrix} T = \begin{bsmallmatrix}\widehat{C}_{k,min} & 0 & \times & 0\end{bsmallmatrix}T, \\
        \widehat{C} &= \begin{bsmallmatrix}\widehat{C}_{\min} & 0 & \times & 0\end{bsmallmatrix}T, \; \text{and} \\
        \theta(s) &= T\inv \begin{bsmallmatrix}(sI-A_{\min})\inv B_{\min} \\ \times \\ 0 \\ 0\end{bsmallmatrix} =: T\inv \begin{bsmallmatrix}\theta_{\min}(s) \\ \times \\ 0 \\ 0\end{bsmallmatrix}.
    \end{align*}
    From this, it is evident that \(C\theta(s) = C_{\min}\theta_{\min}(s)\) and \(\widehat{C}\theta(s) = \widehat{C}_{\min}\theta_{\min}(s)\).  Substituting these back into the integrals, factoring out \(C_{\min}\) and \(\widehat{C}_{\min}\), and replacing the integrals with the Gramian \(\Theta_{\min}\) yields the equalities.  
\end{proof}

\begin{lemma} \label{lem:Ch_full}
    Let 
    \[\widehat{C} = \begin{bsmallmatrix}U_1^* C(s_1I - A)\inv \\ : \\ U_m^* C(s_m I - A)\inv\end{bsmallmatrix}\in\comp^{r\times n},\]
    where \(A\in\comp^{n\times n}\) and \(C\in\comp^{p\times n}\).  Assume each \(s_k\) is distinct and not an eigenvalue of \(A\), each \(U_k\in\comp^{p\times r_i}\) is full rank, each block row is individually full rank, and the pair \((C, A)\) is observable.  Additionally, if \(r\geq n\), then assume \(\rank \begin{bsmallmatrix}U_1 & \cdots & U_m\end{bsmallmatrix} = p\).  Then, \(\rank(\widehat{C}) = \min(r, n)\).  
\end{lemma}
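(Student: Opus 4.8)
The plan is to prove the two extreme cases separately and then combine them. Since $\rank(\widehat{C}) \le \min(r,n)$ holds automatically, it suffices to show that $\widehat{C}$ has full row rank $r$ when $r \le n$, and full column rank $n$ when $r \ge n$ (the case $r=n$ being covered by either). The engine for both is a polynomial reformulation coming from observability. Writing $\chi(s) = \det(sI-A)$ and $Q(s) = C\,\mathrm{adj}(sI-A) \in \comp^{p\times n}$, a matrix polynomial of degree $n-1$, we have $C(sI-A)\inv = Q(s)/\chi(s)$, and $\chi(s_k)\neq 0$ since no $s_k$ is an eigenvalue. Observability of $(C,A)$ is equivalent to injectivity of the map $v \mapsto Q(s)v$ from $\comp^n$ into the space of $\comp^p$-valued polynomials of degree at most $n-1$; call its $n$-dimensional image $\mathcal{Q}$. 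Each block row of $\widehat{C}$ is then $U_k^* Q(s_k)/\chi(s_k)$, so up to the nonzero scalars $\chi(s_k)$ the matrix $\widehat{C}$ is the tangential evaluation operator $v \mapsto \bigl(U_k^* Q(s_k) v\bigr)_k$.

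For the full-column-rank case ($r \ge n$), I would suppose $\widehat{C}v = 0$ and set $q = Qv$, a $\comp^p$-valued polynomial of degree $\le n-1$ lying in $\mathcal{Q}$, so that $U_k^* q(s_k) = 0$ for every $k$. The goal is to force $q \equiv 0$, whence $v = 0$ by injectivity. Here the hypothesis $\rank\begin{bsmallmatrix}U_1 & \cdots & U_m\end{bsmallmatrix} = p$ is what makes the tangential root conditions constrain $q$ along directions spanning all of $\comp^p$; combined with distinctness of the $s_k$ and the degree bound, this over-determines $q$ and collapses it to zero.

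For the full-row-rank case ($r \le n$), I would suppose a nontrivial left combination $\xi^* \widehat{C} = 0$ and set $\eta_k = U_k\xi_k$, which is nonzero whenever $\xi_k \neq 0$ because $U_k$ is full column rank. This yields $\sum_k \eta_k^* C(s_kI-A)\inv = 0$, equivalently $\sum_k \eta_k^* Q(s_k) = 0$ after absorbing the scalars $\chi(s_k)$. Pairing against arbitrary $v$ and using surjectivity of $v \mapsto Qv$ onto $\mathcal{Q}$ reduces this to a tangential Lagrange interpolation problem at the distinct points $s_k$: since the number of distinct points is at most $r \le n$ and the interpolants have degree $\le n-1$, arbitrary tangential data is realizable, which annihilates the relation only if every $\eta_k = 0$, a contradiction. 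The assumption that each block row is individually full rank handles the intra-block directions, while distinctness of the $s_k$ separates distinct blocks. Equivalently, one may phrase this through the dual: the row space of $\widehat{C}$ is spanned by the columns $(\bar{s}_k I - A^*)\inv C^* U_k$, and controllability of $(A^*, C^*)$ (dual to observability) is the structural fact forcing their independence for $r \le n$.

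The main obstacle is precisely the tangential (generalized Vandermonde / confluent) nonsingularity at the heart of both cases: the ordinary Vandermonde–Lagrange argument must be adapted to the presence of the tangential directions $U_k$, including several directions at one point or near-repeated points carrying differing directions. The delicate bookkeeping lies in reconciling the total row count $r = \sum_k r_k$ against the polynomial degree $n-1$ and the output dimension $p$, and this is exactly where the two auxiliary hypotheses enter — the block-rows-full-rank condition to separate directions sharing a point, and $\rank\begin{bsmallmatrix}U_1 & \cdots & U_m\end{bsmallmatrix} = p$ to guarantee that, in the column-rank direction, the conditions see all $p$ output coordinates.
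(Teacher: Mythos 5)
Your reduction to the two rank cases and the polynomial reformulation via $Q(s)=C\,\mathrm{adj}(sI-A)$, $\chi(s)=\det(sI-A)$ is a genuinely different route from the paper's: the paper instead exhibits $\widehat{C}$ as the unique solution of the Sylvester equation $S\widehat{C}-\widehat{C}A=U^*C$ with $S=\mathrm{diag}(s_1I_{r_1},\ldots,s_mI_{r_m})$ and $U^*=\begin{bsmallmatrix}U_1^*\\ \vdots\\ U_m^*\end{bsmallmatrix}$, then reads the rank off a cited rank result for Sylvester solutions by checking controllability of $(S,U^*C)$ via PBH at each $s_k$ (row case, using fullness of the block rows) and observability of $(U^*C,A)$ via Sylvester's rank inequality and $\rank U=p$ (column case). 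The problem with your proposal is that it stops exactly at the step carrying all the content. In both cases you reduce matters to a ``tangential Vandermonde nonsingularity'' and then assert it (``over-determines $q$ and collapses it to zero''; ``arbitrary tangential data is realizable''), explicitly flagging it as the main obstacle rather than proving it. But the map whose injectivity (resp.\ surjectivity) you are asserting is, up to the nonzero scalars $\chi(s_k)$, precisely $\widehat{C}$ itself, so the assertion is the lemma in thin disguise.

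Moreover, the one justification you do offer --- counting distinct points against the degree bound $n-1$ --- cannot be completed, because $\mathcal{Q}$ is only an $n$-dimensional subspace of the $pn$-dimensional space of $\comp^p$-valued polynomials of degree at most $n-1$, and tangential conditions that are independent on the full polynomial space can be dependent on $\mathcal{Q}$. Concretely, take $A=\mathrm{diag}(\alpha,\beta,\gamma)$, $C=\begin{bsmallmatrix}1&1&0\\0&0&1\end{bsmallmatrix}$ (an observable pair with $n=3$, $p=2$), and $U_1=U_2=e_2$ at distinct non-eigenvalue points $s_1,s_2$: each block row $U_k^*C(s_kI-A)\inv=\begin{bsmallmatrix}0&0&(s_k-\gamma)\inv\end{bsmallmatrix}$ is full row rank, yet the two rows are parallel and $\rank\widehat{C}=1<r=2\le n$. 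Every ingredient your row-rank argument invokes (distinct points, full-rank $U_k$, full-rank block rows, observability, $r\le n$, degree count) is present here, so the ``arbitrary tangential data is realizable'' step is simply false at this level of generality; the directions $U_k$ must interact with the specific realization $(C,A)$ across points, which is structure your sketch never uses. (This example in fact also meets the lemma's stated hypotheses for $r<n$, so it puts pressure on the statement and on the paper's own PBH-based proof as well; but in any case it rules out completing your argument by point-and-degree counting alone.)
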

\begin{proof}
    First, we can write \(\widehat{C}\) as the solution of a Sylvester equation.  Indeed,
    \[\underbrace{\begin{bsmallmatrix}s_1 I_{r_1} & & \\ & \scalemath{0.5}{\ddots} & \\ & & s_m I_{r_m}\end{bsmallmatrix}}_S \widehat{C} - \widehat{C} A = \underbrace{\begin{bsmallmatrix}U_1^* \\ : \\ U_m^*\end{bsmallmatrix}}_{U^*} C.\]
    This becomes clear when checking a block row of the Sylvester equation:
    \begin{align*}
        s_k I_{r_k} U_k^* C(s_k I_n - A)\inv - U_k^* C(s_k I_n - A)\inv A &= U_k^* C \\
        U_k^* C\cancel{(s_k I_n - A)\inv} \cancel{\p{s_k I_n - A}} &= U_k^* C.
    \end{align*}
    The eigenvalues of \(S\) are \(s_1\) through \(s_m\) which we assumed are not eigenvalues of \(A\), thus due to Lemma~\ref{lem:sylvester}, we see \(\widehat{C}\) has a unique solution.  
    
    If \(r\leq n\), then \(\widehat{C}\) is full row rank when \((S, U^*C)\) is controllable, which we can see from the PBH test for controllability, i.e. for all eigenvalues of \(S\) \(\lambda\), \(\rank \begin{bsmallmatrix}S-\lambda I & U^*C\end{bsmallmatrix} = r\).  Let \(\lambda = s_k\), then \(S-s_k I\) drops rank by \(r_k\) because the \(k\)th block row of \(S\) (the row containing \(s_k\)) becomes zero.  However, the \(k\)th block row of \(U^*C\) is \(U_k^* C\) which is full rank because we assumed each \(U_k^* C(s_k I - A)\inv\) was, thus the matrix \(\begin{bsmallmatrix}S - \lambda I & U^*C\end{bsmallmatrix}\) is always full row rank and the pair \((S, U^*C)\) is controllable.  
    
    If \(r \geq n\), then \(\widehat{C}\) is full column rank when \((U^*C, A)\) is observable.  We can show this using the PBH test for observability.  Indeed, consider the matrix
    \[\begin{bsmallmatrix}A - \lambda I \\ U^*C\end{bsmallmatrix} = \begin{bsmallmatrix}I & 0 \\ 0 & U^*\end{bsmallmatrix} \begin{bsmallmatrix}A - \lambda I \\ C\end{bsmallmatrix}.\]
    Suppose \(U\) has full column rank, i.e. \(\rank U = p\).  Then \(\rank \begin{bsmallmatrix}I & 0 \\ 0 & U^*\end{bsmallmatrix} = n + p\), and because the pair \((C, A)\) is observable, \(\rank \begin{bsmallmatrix}A - \lambda I \\ C\end{bsmallmatrix} = n\) for all \(\lambda\).  Applying Sylvester's rank inequality, we get 
    \[\rank \begin{bsmallmatrix}A - \lambda I \\ U^* C\end{bsmallmatrix} \geq (p+n) + n - (p+n) = n,\]
    Thus the pair \((U^*C, A)\) is observable.  Thus under our assumptions, \(\widehat{C}\) is always full rank, i.e. its rank is \(\min(r, n)\).  
\end{proof}

\begin{corollary} \label{cor:Ct_full}
    Consider the matrix
    \begin{align*}
        \sC &= \begin{bmatrix}C_1 \\ : \\ C_\ell\end{bmatrix}, \; C_k = \scalemath{0.8}{\begin{cases}\hat{C}(\jw_k, U_k) \text{ if \(G\) complex or \(\omega_k = 0\)} \\ \begin{bmatrix}\Re(\hat{C}(\jw_k, U_k)) \\ -\Im(\hat{C}(\jw_k, U_k))\end{bmatrix} \; \text{o.w.}\end{cases}}
    \end{align*}
    where \(\hat{C}(s, U) = U^*C(s I - A)\inv\) and \(\sC\in\comp^{r\times n}\).  Let
    \begin{align*}
        \sC[\min] &= \begin{bmatrix}C_{1,\min} \\ : \\ C_{\ell,\min} \end{bmatrix} \in\comp^{r\times n_{\min}}, \\
        C_{k,\min} &= \scalemath{0.8}{\begin{cases}\hat{C}_{\min}(\jw_k, U_k) \text{ if \(G\) complex or \(\omega_k = 0\)} \\ \begin{bmatrix}\Re(\hat{C}_{\min}(\jw_k, U_k)) \\ -\Im(\hat{C}_{\min}(\jw_k, U_k))\end{bmatrix} \; \text{o.w.}\end{cases}}
    \end{align*}

    Then, under the same assumptions from the previous two lemmas, \(\rank \sC = \rank\sC[\min] = \min(r, n_{\min})\), \(\sC\Theta \sC^* = \sC[\min]\Theta_{\min} \sC[\min]^*\), and \(C\Theta \sC^* = C\Theta_{\min} \sC[\min]^*\).
\end{corollary}
\begin{proof}
     First, note that when \(A\) is real, \(\overline{\hat{C}}(s, U) = \overline{U}^* C(\overline{s} I - A)\inv = \hat{C}(\overline{s}, \overline{U})\).  We can write \(C_k\) when \(\omega_k \neq 0\) or when \(G\) is real as
    \begin{align*}
        &\begin{bsmallmatrix}\Re(\hat{C}(\jw_k, U_k)) \\ -\Im(\hat{C}(\jw_k, U_k))\end{bsmallmatrix} = \begin{bsmallmatrix}\frac{1}{2}(\hat{C}(\jw_k, U_k) + \overline{\hat{C}}(\jw_k, U_k)) \\ \frac{1}{2j}(\overline{\hat{C}}(\jw_k, U_k) - \hat{C}(\jw_k, U_k))\end{bsmallmatrix} \\
        &= \begin{bsmallmatrix}\frac{I}{2} & \frac{I}{2} \\ -\frac{I}{2j} & \frac{I}{2j}\end{bsmallmatrix} \begin{bsmallmatrix}\hat{C}(\jw_k, U_k) \\ \hat{C}(-\jw_k, \overline{U}_k)\end{bsmallmatrix} =: J_k \mathcal{C}_k.
    \end{align*}
    In the case that \(G\) is complex or \(\omega_k = 0\), \(J_k = I\) and \(\mathcal{C}_k = \hat{C}(\jw_k, U_k)\).  Thus, \(\sC = J\widehat{C}\), where \(J\in\comp^{r\times r} := \mathrm{diag} (J_1, \ldots, J_\ell)\) and
    \[\widehat{C} \in\comp^{r\times n} := \begin{bsmallmatrix}\mathcal{C}_1 \\ : \\ \mathcal{C}_\ell\end{bsmallmatrix} := \begin{bsmallmatrix}\hat{C}_1 \\ : \\ \hat{C}_m\end{bsmallmatrix}.\] 
    Note the reindexing of \(\widehat{C}\), which highlights the splitting up of each \(\mathcal{C}_k\) so that each \(\hat{C}_k\) corresponds to one point on the imaginary axis.  In clearer words, when \(\omega_k\neq 0\) or \(G\) is real, the corresponding \(\mathcal{C}_k\) is split into two \(\hat{C}_k\)s, one for \(\jw_k\) and one for \(-\jw_k\).  

    Given that \(J\) is invertible, \(\rank \sC = \rank\widehat{C}\).  Utilizing an intermediate result from Lemma~\ref{lem:min_real}, we have 
    \[\rank \widehat{C} = \rank \p{\begin{bsmallmatrix}\widehat{C}_{\min} & 0 & \times & 0\end{bsmallmatrix} T} = \rank \widehat{C}_{\min},\]
    which is equal to \(\min(r,n_{\min})\) according to Lemma~\ref{lem:Ch_full} because the pair \((C_{\min}, A_{\min})\) is observable, thus \(\rank\sC = \min(r, n_{\min})\).  
    
    In addition to this, let \(\sC[\min] := J\widehat{C}_{\min}\).  Then, 
    \begin{align*}
        \sC\Theta\sC^* &= J\widehat{C}_{\min}\Theta_{\min}\widehat{C}_{\min}^* J^* = \sC[\min]\Theta_{\min}\sC[\min]^*, \\
        C\Theta\sC^* &= JC\Theta_{\min}\widehat{C}_{\min}^* J^* = C_{\min}\Theta_{\min}\sC[\min]^*
    \end{align*}
    as required.
\end{proof}

\begin{lemma} \label{lem:sylvester} \cite{desouza81}
    Consider the Sylvester equation
    \[A_1X - XA_2 = B\]
    where \(A_1\in\comp^{r\times r}\), \(A_2\in\comp^{n\times n}\), \(X\in\comp^{r\times n}\), and \(B\in\comp^{r\times n}\).  If there are no shared eigenvalues between \(A_1\) and \(A_2\), then a unique solution for \(X\) exists.  Furthermore, if \(r \leq n\) and the pair \((A_1, B)\) is controllable, then \(X\) is full row rank.  If \(r\geq n\), then \(X\) is full column rank if the pair \((B, A_2)\) is observable.  If \(r=n\), then both must be true.  
\end{lemma}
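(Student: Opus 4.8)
The plan is to treat the three assertions in turn: existence and uniqueness first, then the two rank statements, the second of which I would reduce to the first by transposition.

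For existence and uniqueness I would regard the left-hand side as a linear operator \(\mathcal{L}(X) = A_1 X - X A_2\) on \(\comp^{r\times n}\). Vectorizing, \(\mathcal{L}\) is represented by \(I_n\otimes A_1 - A_2^{\top}\otimes I_r\), whose spectrum is exactly the set of differences \(\lambda_i-\mu_j\), where \(\lambda_i\in\sigma(A_1)\) and \(\mu_j\in\sigma(A_2)\); equivalently, if \(A_1u=\lambda u\) and \(v^*A_2=\mu v^*\) then \(\mathcal{L}(uv^*)=(\lambda-\mu)uv^*\). The disjoint-spectrum hypothesis makes every such difference nonzero, so \(\mathcal{L}\) is invertible and \(X=\mathcal{L}^{-1}(B)\) is the unique solution. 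This part is routine.

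For the rank claims I would first observe that the two are dual. Taking conjugate transposes converts \(A_1X-XA_2=B\) into \(A_2^*X^*-X^*A_1^*=-B^*\), a Sylvester equation in \(X^*\) for which \((A_2^*,-B^*)\) is controllable precisely when \((B,A_2)\) is observable, and the size inequality \(r\ge n\) becomes the corresponding \(\le\) condition. Hence it suffices to prove the controllability statement, namely that \((A_1,B)\) controllable with \(r\le n\) forces \(X\) to have trivial left null space. The clean mechanism is the eigenvector case: if \(w^*A_1=\lambda w^*\) and \(w^*X=0\), then \(w^*B = w^*(A_1X-XA_2)=\lambda w^*X - w^*XA_2 = 0\), which the PBH test rules out for a controllable pair. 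So no left eigenvector of \(A_1\) can lie in the left null space of \(X\).

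The difficulty, and the step I expect to be the main obstacle, is that the left null space of \(X\) need not be \(A_1^*\)-invariant: from \(w^*X=0\) one only gets \(w^*A_1X=w^*B\), so one cannot simply extract an eigenvector from it. To bridge this gap I would pass to the resolvent representation \(X=\frac{1}{2\pi j}\oint_\Gamma (sI-A_1)\inv B\,(sI-A_2)\inv\,\rmd s\), with \(\Gamma\) enclosing \(\sigma(A_1)\) and excluding \(\sigma(A_2)\); one checks it solves the equation using \(A_1(sI-A_1)\inv = s(sI-A_1)\inv - I\) and the residue theorem. Then \(w^*X=\frac{1}{2\pi j}\oint_\Gamma \phi(s)(sI-A_2)\inv\,\rmd s\) with \(\phi(s)=w^*(sI-A_1)\inv B\), and controllability guarantees \(\phi\not\equiv 0\) whenever \(w\neq 0\). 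Evaluating the integral by residues expresses \(w^*X\) as a combination of the residue rows of \(\phi\), each weighted by an invertible factor \((\lambda_i I-A_2)\inv\) (invertible exactly because the spectra are disjoint). The crux is to show these weighted contributions cannot cancel — equivalently, that the associated Cauchy/Vandermonde-type coupling matrix is nonsingular — so that \(w^*X\neq 0\). This non-cancellation is the delicate point where the disjoint-spectrum and dimension hypotheses must do their real work, and it is where I would concentrate the argument; the observability case then follows immediately from the transposition reduction above.
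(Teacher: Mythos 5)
The paper itself offers no proof of this lemma --- it is stated with only the citation \cite{desouza81} --- so your attempt has to stand on its own. Your first two steps are fine: the Kronecker/spectral argument for existence and uniqueness is correct, and the conjugate-transpose reduction of the observability claim to the controllability claim is valid. You have also correctly located the obstruction: the left null space of \(X\) is not \(A_1\)-invariant, so the PBH eigenvector computation only shows that no left \emph{eigenvector} of \(A_1\) annihilates \(X\). But you then explicitly defer the whole burden of the proof to a ``non-cancellation'' property of a Cauchy-type coupling matrix that you never establish.

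That deferred step is not merely delicate; it is false in general, and so is the rank claim as stated. Take \(r=n=2\), \(A_1=\mathrm{diag}(1,2)\), \(A_2=\mathrm{diag}(3,4)\), and \(B=\left(\begin{smallmatrix}2&1\\3&2\end{smallmatrix}\right)\). The pair \((A_1,B)\) is controllable and \((B,A_2)\) is observable (PBH holds at every eigenvalue of \(A_1\) and of \(A_2\)), yet the unique solution \(x_{ij}=b_{ij}/(\lambda_i-\mu_j)\) gives \(X=\left(\begin{smallmatrix}-1&-1/3\\-3&-1\end{smallmatrix}\right)\), which is singular; its left kernel is spanned by \((3,\,-1)\), which is not a left eigenvector of \(A_1\) --- exactly the case your argument cannot reach. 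What your eigenvector computation actually proves is the \emph{converse} implication: controllability of \((A_1,B)\) is necessary for \(X\) to be full row rank (if \(w^*A_1=\lambda w^*\) and \(w^*B=0\), then \(w^*X(\lambda I-A_2)=0\) forces \(w^*X=0\)), and dually for observability. Sufficiency holds essentially only when \(B\) has rank one, where the Cauchy-matrix nonsingularity you need does go through; this is the content of the cited reference, and the lemma as transcribed overstates it. No completion of your outline can rescue the statement without an added hypothesis --- a point worth flagging, since the paper applies the lemma in Lemma~\ref{lem:Ch_full} with a right-hand side \(U^*C\) that generally has rank larger than one.
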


\section*{References}
\bibliographystyle{IEEEtran}
\bibliography{IEEEabrv, paper_bib}

\begin{IEEEbiography}[{\includegraphics[width=1in,height=1.25in,clip,keepaspectratio]{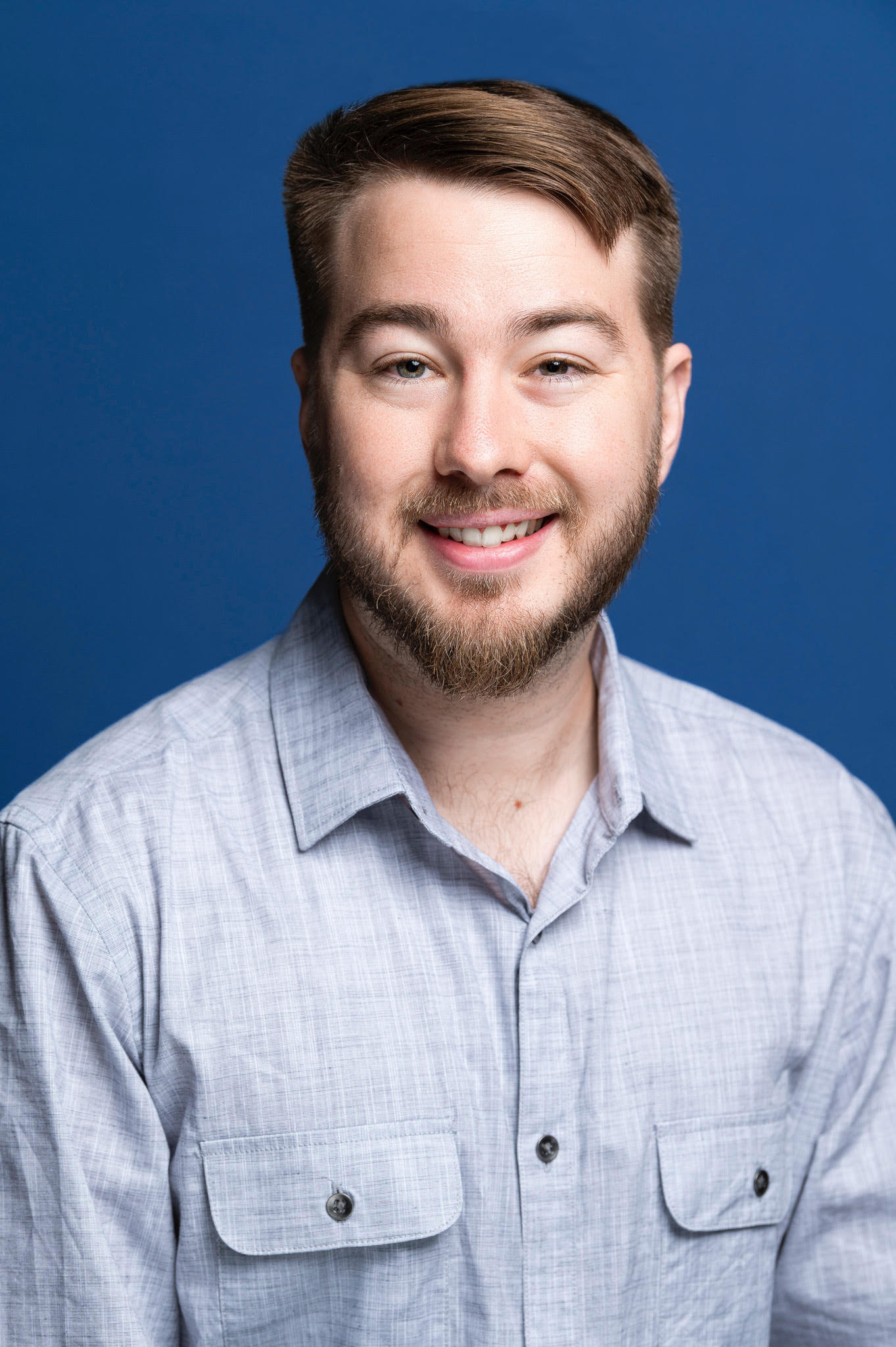}}]{Jared Jonas} received a B.Sc. degree in mechanical engineering from California State University, Fullerton in 2019 and a M.Sc. degree in mechanical engineering from University of California, Los Angeles in 2022.  After, he joined the department of mechanical engineering at University of California, Santa Barbara where he is currently a Ph.D. candidate.  His research interests include computational methods in optimal control, reduced order modeling, and system identification.  
\end{IEEEbiography}

\begin{IEEEbiography}[{\includegraphics[width=1in,height=1.25in,clip,keepaspectratio]{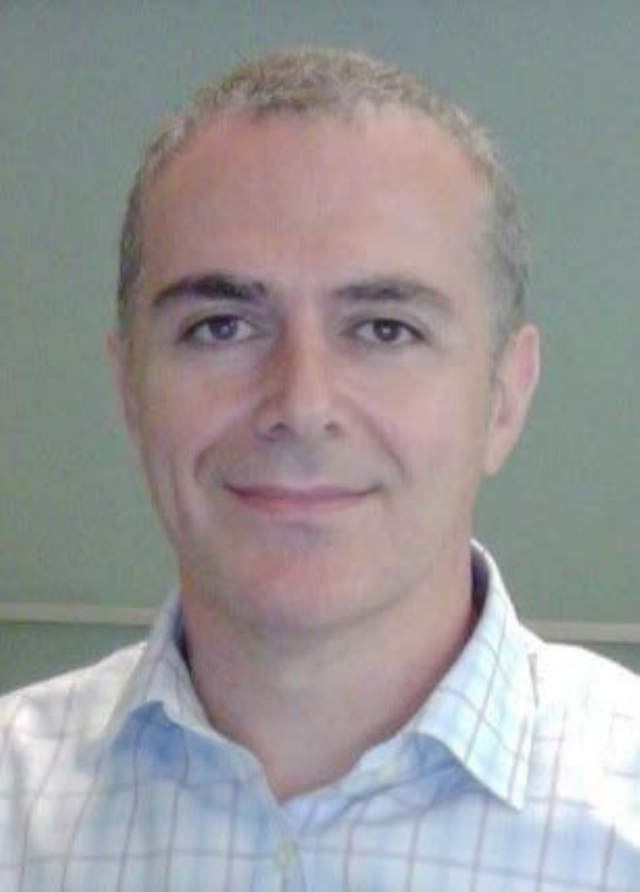}}]{Bassam Bamieh}
(Fellow, IEEE) received the B.Sc. degree in electrical engineering and physics from Valparaiso University, Valparaiso, IN, USA, in 1983, and the M.Sc. and Ph.D. degrees in electrical and computer engineering from Rice University, Houston, TX, USA, in 1986 and 1992, respectively.
From 1991 to 1998, he was an Assistant Professor with the Department of Electrical and Computer Engineering, and the Coordinated Science Laboratory, University of Illinois
at Urbana-Champaign, after which he joined the University of California at Santa Barbara (UCSB), where he is currently a Professor of Mechanical Engineering. His research interests include robust and optimal control, distributed and networked control and dynamical systems, and related problems in fluid and statistical mechanics and thermoacoustics.
Dr. Bamieh is a past recipient of the IEEE Control Systems Society G. S. Axelby Outstanding Paper Award (twice), the AACC Hugo Schuck Best Paper Award, and the National Science Foundation CAREER Award. He was a Distinguished Lecturer of the IEEE Control Systems Society (twice), and is a Fellow of the International Federation of Automatic Control (IFAC).                                                                                     
\end{IEEEbiography}
\end{document}